\documentclass[10pt]{article}
\usepackage[pdftex]{graphicx}
\usepackage{caption}
\usepackage{subcaption}
\usepackage{dcolumn}
\usepackage{bm}
\usepackage{color}
\usepackage{amsfonts}
\usepackage{amsmath}
\usepackage{stmaryrd}
\usepackage{epsfig}
\usepackage[english]{babel}
\usepackage[all]{xy}
\usepackage{ulem}
\usepackage[top=25.4mm, bottom=25.4mm, left=31.7mm, right=32.2mm]{geometry}
\usepackage{amssymb}

\DeclareMathOperator{\arccosh}{arccosh}
\DeclareMathOperator{\arctanh}{arctanh}

\usepackage{xcolor}
\usepackage{multicol}
\usepackage{color}

\definecolor{LightBlue}{rgb}{0.8,0.8,1}

\renewcommand{\theequation}{\arabic{equation}}

\input{tcilatex}
\begin{document}

\title{\textbf{Lagrangian formulation of   the Darboux system}}
\author{ Lingling Xue$^{1}$, E.V. Ferapontov$^{2}$, M.V. Pavlov$^{1}$}
\date{}
\maketitle

\renewcommand{\baselinestretch}{1.25}

\vspace{-5mm}

\begin{center}
$^{1}$Department of Applied Mathematics\\[0pt]
Ningbo University\\[0pt]
Ningbo 315211, P.R. China \\[0pt]
\ \\[0pt]
$^{2}$Department of Mathematical Sciences \\[0pt]
Loughborough University \\[0pt]
Loughborough, Leicestershire LE11 3TU, UK \\[0pt]
\ \\[0pt]
e-mails: \\[1ex]
\texttt{xuelingling@nbu.edu.cn}\\[0pt]
\texttt{E.V.Ferapontov@lboro.ac.uk}\\[0pt]
\texttt{maksim@nbu.edu.cn}\\[0pt]
\end{center}

\bigskip

\begin{abstract}

The classical Darboux system governing rotation coefficients of three-dimensional metrics of diagonal curvature possesses an equivalent formulation as a sixth-order PDE for a scalar potential (related to the corresponding $\tau$-function). We demonstrate that this PDE is  Lagrangian and can be viewed as an explicit  scalar form of the `generating PDE of the KP hierarchy' as discussed recently in Nijhoff \cite{Nijhoff2024} in the Lagrangian multiform approach to the Darboux and KP hierarchies. Scalar Lagrangian formulations for differential-difference and fully discrete versions of the Darboux system are also constructed. In the first three cases (continuous and differential-difference with one and two discrete variables), the corresponding Lagrangians are expressible via elementary functions (logarithms), whereas the fully discrete case requires special functions (dilogarithms).

Remarkably, dispersionless limits of the above Lagrangians provide a complete list of 3D second-order integrable Lagrangians of the form $\int f(u_{xy}, u_{xt}, u_{yt})\, dxdydt$.

\end{abstract}

\bigskip

 MSC: 35Q51, 37K06, 37K10, 37K20, 37K58, 53A70.

\bigskip

 \textbf{Keywords:} {Darboux system, Lagrangian formulation,  dispersionless limit. }

\newpage

\bigskip

\tableofcontents


\section{Introduction and summary of the main results}
\label{sec:intro}

Given a diagonal metric written in terms of the Lam\'e coefficients $H_i$, 
$$
\sum_{i=1}^n H_i^2(dx^i)^2,
$$
let us introduce the rotation coefficients $\beta_{ki}$ via 
\begin{equation}\label{c1Lame}
\partial _{k}H_{i}=\beta _{ki}H_{k},
\end{equation}
where $\partial_k$ denotes partial derivative with respect to  $x^k$. The requirement that the metric has `diagonal curvature' (that is, all curvature components  $R^i_{kkj}=0$ for  $i\ne j\ne k$), leads to the Darboux system 
 for the rotation coefficients $\beta_{ki}$,
\begin{equation}\label{c1Dar}
\partial _{k}\beta _{ij}=\beta _{ik}\beta _{kj},
\end{equation}
no summation. This system has been extensively studied by Darboux in the context of $n$-orthogonal coordinate systems in $\mathbb{R}^n$ \cite{Darboux}. From the point of view of the modern theory of integrable systems, Darboux system constitutes three-dimensional $n$-wave system, for which linear system (\ref{c1Lame}) acts as the corresponding Lax representation, indeed, equations (\ref{c1Dar}) are the compatibility conditions of  (\ref{c1Lame}). Darboux system, as well as its differential-difference and fully discrete versions,
appear in a wide range of applications in differential geometry (both continuous and discrete),  in the context of KP hierarchy, in the theory of integrable systems of hydrodynamic type and Frobenius manifolds, etc., see e.g. \cite{BK, DS, KS, Dub, Schief03, Tsarev2, DS2000, Nijhoff2024} and references therein. It is well-known that under the so-called symmetric reduction, $\beta_{ij}=\beta_{ji}$, Darboux system (\ref{c1Dar}) can be written as a collection of compatible third-order PDEs for a single potential $u$, one PDE for every triple of distinct indices. This can be achieved by setting $\beta_{ij}=\sqrt {u_{ij}}$ (here and in what follows, lower indices of the potential $u$ indicate partial derivatives), leading to 
$$
u_{ijk}=2\sqrt{u_{ij}u_{ik}u_{jk}}.
$$
It seems to be  less well-known, although explicitly mentioned in (Darboux   \cite{Darboux}, Chapter III, formula (13)), that the full Darboux system (\ref{c1Dar}) can  be represented as a collection of compatible sixth-order  PDEs for a single potential $u$ defined via the relations
\begin{equation}\label{c1u}
u_{ij}=\beta _{ij}\beta _{ji};
\end{equation} 
note that relations (\ref{c1u}) are compatible modulo  Darboux system (\ref{c1Dar}). This potential was known to Lam\'e and Darboux, see e.g.   (\cite{Darboux}, Chapter III, formula (3)); it was observed later in \cite{DM, D} that $u$ is related to the $\tau$-function of  KP hierarchy via  $u=-\ln \tau$. 

In Section \ref{s:3+0} we show that the sixth-order PDE derived by Darboux is Lagrangian, thus, Darboux system (\ref{c1Dar}) can be written as a collection of compatible sixth-order Lagrangian PDEs for $u$, one PDE for every triple of distinct indices:
\begin{equation}\label{PDE}
\displaystyle \partial_i\partial_j\left(\frac{u_{ijk}+L}{2u_{ij}}\right)+\partial_i\partial_k\left(\frac{u_{ijk}+L}{2u_{ik}}\right)
+\partial_j\partial_k\left(\frac{u_{ijk}+L}{2u_{jk}}\right)-\partial_i\partial_j\partial_k\left(\ln ({u_{ijk}-L})\right)=0,
\end{equation}
where $L=\sqrt{u_{ijk}^2-4\, u_{ij}u_{ik}u_{jk}}$. Equation (\ref{PDE})  is represented in Euler-Lagrange form corresponding to a third-order Lagrangian,  $\int  F \, dx^i dx^j dx^k$, with the Lagrangian density
\begin{equation}\label{F}
F=L+u_{ijk}\ln (u_{ijk}-L);
\end{equation}
it is yet to be explored whether all these Lagrangians could be combined into a Lagrangian multiform structure in the spirit of \cite{Nijhoff2023, Nijhoff2024}. 
We show  that PDE (\ref{PDE}) is equivalent to  the `generating PDE  of the KP hierarchy' whose two-component form was proposed recently in \cite{Nijhoff2024}; see Section \ref{sec:Nij} for explicit  derivation of PDE (\ref{PDE}) from the formulae of  \cite{Nijhoff2024}. We point out that modulo  total derivatives, Lagrangian (\ref{F}) can be written in several equivalent forms such as
$$
F=L+\frac{1}{2}u_{ijk}\ln \frac{u_{ijk}-L}{u_{ijk}+L}\qquad {\rm or} \qquad F=L-u_{ijk}\arctanh \frac{L}{u_{ijk}}.
$$

 In Sections \ref{s:2+1} and \ref{s:1+2} we construct analogous Lagrangian formulations for differential-difference Darboux systems with one and two discrete variables, respectively. 
 The corresponding Lagrangian densities $F$ become considerably more complicated (due to natural asymmetry), although still expressible via elementary functions (logarithms).

 In Sections \ref{s:0+3} we do the same for the fully discrete  Darboux system: 
 the corresponding Lagrangian density $F$ is expressible via special functions (dilogarithms).

Remarkably, dispersionless limits of the four Lagrangian densities $F$ of the Darboux system (including its differential-difference and fully discrete versions), provide a complete list of second-order integrable Lagrangians of the form 
\begin{equation}\label{sec}
\int f(u_{xy},u_{xt},u_{yt})\ \text{d}x\text{d}y\text{d}t,
\end{equation}
as classified recently in \cite{XFP} (this paper is not yet published, however, we find it appropriate to announce the results, see Section \ref{sec:Lag}). Up to certain natural equivalence, there exist exactly four integrable Lagrangians of type (\ref{sec}), the simplest of them being $f=\sqrt {u_{xy}u_{xt}u_{yt}}$. One can see that it can be obtained as dispersionless limit of Lagrangian (\ref{F}) by setting $u_{ijk}\to 0$.

\section{Scalar Lagrangian formulation of the Darboux system}
\label{sec:Dar}

In this section we  demonstrate that the Darboux system can be written as a single sixth-order  Lagrangian equation in terms of the potential $u$ related to  the corresponding  $\tau$-function. We consider separately continuous, differential-difference and fully discrete cases.

\subsection{Continuous case}
\label{s:3+0}

Here the starting point is linear system (\ref{c1Lame}) for the Lam\'e coefficients $H_i$,
whose compatibility conditions constitute  Darboux system (\ref{c1Dar}) for the rotation coefficients $\beta_{ki}$.
Let us introduce a potential $u$ via  relations (\ref{c1u}).
Our goal is to rewrite  Darboux system (\ref{c1Dar}) as a single PDE in terms of the potential $u$. Here we essentially follow \cite{Darboux}, Chapter III. Introducing the notation $m=\beta_{12}\beta_{23}\beta_{31}$ and $n=\beta_{13}\beta_{32}\beta_{21}$, one has $u_{123}=m+n$ and $mn=u_{12}u_{13}u_{23}$. Solving for $m$ and $n$ one obtains 
\begin{equation}\label{c1pq}
m=\frac{u_{123}-L}{2}, \quad n=\frac{u_{123}+L}{2},
\end{equation}
where $L=\sqrt{u_{123}^{2}-4\, u_{12}u_{13}u_{23}}$. Note that the choice of sign of the square root $L$ will not affect the final formulae.

\begin{proposition} \label{pc1} Darboux system (\ref{c1Dar}) can be written as a single sixth-order PDE for $u$,
\begin{equation}\label{c1PDE}
\displaystyle \partial_1\partial_2\left(\frac{u_{123}+L}{2u_{12}}\right)+\partial_1\partial_3\left(\frac{u_{123}+L}{2u_{13}}\right)
+\partial_2\partial_3\left(\frac{u_{123}+L}{2u_{23}}\right)-\partial_1\partial_2\partial_3\left(\ln ({u_{123}-L})\right)=0.
\end{equation}
Equation (\ref{c1PDE}) is represented in Euler-Lagrange form corresponding to a Lagrangian  $ \int  F \, dx^1 dx^2 dx^3$, with the Lagrangian density
\begin{equation}\label{c1F}
F=L+u_{123}\ln (u_{123}-L).
\end{equation}
\end{proposition}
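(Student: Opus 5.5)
Two claims need proof: that the Darboux system (\ref{c1Dar}) implies (\ref{c1PDE}) for the potential $u$, and that (\ref{c1PDE}) is the Euler--Lagrange equation of the density (\ref{c1F}).

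\emph{Derivation of the PDE.} I would use the quantities $m,n$ of (\ref{c1pq}), so $n=(u_{123}+L)/2$ and $m=(u_{123}-L)/2$. Since $u_{12}=\beta_{12}\beta_{21}$, we have
$$\frac{n}{u_{12}}=\frac{\beta_{13}\beta_{32}}{\beta_{12}}.$$
By (\ref{c1Dar}), $\partial_3\beta_{12}=\beta_{13}\beta_{32}$, hence $n/u_{12}=\partial_3\ln\beta_{12}$. In the same way,
$$\frac{n}{u_{13}}=\frac{\beta_{32}\beta_{21}}{\beta_{31}}=\partial_2\ln\beta_{31},\qquad \frac{n}{u_{23}}=\frac{\beta_{13}\beta_{21}}{\beta_{23}}=\partial_1\ln\beta_{23}.$$
Applying $\partial_1\partial_2$, $\partial_1\partial_3$ and $\partial_2\partial_3$ respectively and summing gives
$$\partial_1\partial_2\partial_3\ln(\beta_{12}\beta_{23}\beta_{31})=\partial_1\partial_2\partial_3\ln m.$$
Finally, $\ln m=\ln(u_{123}-L)-\ln 2$, which yields (\ref{c1PDE}). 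For the reverse direction (equivalence), one reconstructs the $\beta$'s from $u$ locally. I would note that this is the classical statement of Darboux; the forward implication is what the computation above establishes.

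\emph{Euler--Lagrange form.} Since $F$ depends only on $u_{12},u_{13},u_{23},u_{123}$, its Euler--Lagrange expression is
$$\partial_1\partial_2 F_{u_{12}}+\partial_1\partial_3F_{u_{13}}+\partial_2\partial_3F_{u_{23}}-\partial_1\partial_2\partial_3F_{u_{123}}.$$
It remains to compute the partial derivatives. Write $a=u_{123}$ and $p=u_{12}u_{13}u_{23}$, so $L^2=a^2-4p$ and $L_{u_{12}}=-2u_{13}u_{23}/L$. Then
$$F_{u_{12}}=L_{u_{12}}\Bigl(1-\frac{a}{a-L}\Bigr)=\frac{2u_{13}u_{23}}{a-L}.$$
Using $(a-L)(a+L)=4p$, this equals
$$\frac{u_{13}u_{23}(a+L)}{2p}=\frac{a+L}{2u_{12}},$$
and analogously for $u_{13}$ and $u_{23}$. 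For the third-order derivative, with $L_a=a/L$,
$$F_{u_{123}}=\frac{a}{L}+\ln(a-L)+\frac{a(1-a/L)}{a-L}=\ln(a-L),$$
since $a(1-a/L)=-a(a-L)/L$ cancels the first term. Substituting these into the Euler--Lagrange expression reproduces (\ref{c1PDE}) exactly.

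The only delicate points are the simplification of $F_{u_{12}}$ via $mn=p$ and keeping the index conventions $\beta_{ij}$ versus $\beta_{ji}$ straight in the first part; neither is a real obstacle. The sign choice of $L$ exchanges $m$ and $n$, and by the symmetry of the argument the final equation is unaffected.
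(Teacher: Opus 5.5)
Your proposal is correct and is essentially the paper's argument. The paper writes $\beta_{12}=\sqrt{u_{12}}\,e^{\varphi}$, $\beta_{31}=\sqrt{u_{13}}\,e^{\psi}$, $\beta_{23}=\sqrt{u_{23}}\,e^{\eta}$, so its $\varphi,\psi,\eta$ differ from your $\ln\beta_{12},\ln\beta_{31},\ln\beta_{23}$ only by $\tfrac12\ln u_{ij}$, and applying $\partial_1\partial_2\partial_3$ to $\ln m$ is exactly its step. Your choice of variables lands directly on (\ref{c1PDE}), whereas the paper first obtains the equivalent form (\ref{c1pde}). For the Lagrangian you verify the Euler--Lagrange identity by differentiating $F$ (your $F_{u_{12}}=(u_{123}+L)/(2u_{12})$ and $F_{u_{123}}=\ln(u_{123}-L)$ are right). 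The paper instead reconstructs $L+u_{123}\ln\frac{u_{123}-L}{2\sqrt{u_{12}u_{13}u_{23}}}$ by integration and then discards a total derivative.
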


\noindent {\it Proof:} Parametrising relations (\ref{c1u}) in the form
$$
\begin{array}{c}
\beta_{12}=\sqrt{u_{12}}\, e^{\varphi}, \quad \beta_{21}=\sqrt{u_{12}}\, e^{-\varphi}, \\
\beta_{13}=\sqrt{u_{13}}\, e^{-\psi}, \quad \beta_{31}=\sqrt{u_{13}}\, e^{\psi}, \\
\beta_{23}=\sqrt{u_{23}}\, e^{\eta}, \quad \beta_{32}=\sqrt{u_{23}}\, e^{-\eta},
\end{array}
$$
and substituting into the expression (\ref{c1pq}) for $m$, we obtain
\begin{equation}\label{c1short}
\varphi+\psi +\eta=\ln\frac{m}{\sqrt{u_{12}u_{13}u_{23}}}=\ln\frac{u_{123}-L}{2\sqrt{u_{12}u_{13}u_{23}}}.
\end{equation}
Under the same parametrisation, Darboux system  (\ref{c1Dar}) simplifies to
$$
\partial_3\varphi=\frac{L}{2u_{12}}, \quad \partial_2\psi=\frac{L}{2u_{13}}, \quad \partial_1\eta=\frac{L}{2u_{23}}.
$$
Applying to (\ref{c1short}) the operator $\partial_1\partial_2\partial_3$, one obtains a sixth-order PDE for $u$,
\begin{equation}\label{c1pde}
\partial_1\partial_2\left(\frac{L}{2u_{12}}\right)+\partial_1\partial_3\left(\frac{L}{2u_{13}}\right)+\partial_2\partial_3\left(\frac{L}{2u_{23}}\right)-\partial_1\partial_2\partial_3\left(\ln\frac{u_{123}-L}{2\sqrt{u_{12}u_{13}u_{23}}}\right)=0,
\end{equation}
which is equivalent to (\ref{c1PDE}).  We emphasize that this PDE stays the same if we change the sign of $L$. Remarkably, equation (\ref{c1pde}) is  in Euler-Lagrange form. To reconstruct the corresponding Lagrangian,  note that for a Lagrangian density  of the form $F=F(u_{12},\, u_{13},\, u_{23},\, u_{123})$, the corresponding Euler-Lagrange equation is 
$$
\partial_1\partial_2\left(\frac{\partial F}{\partial u_{12}}\right)+\partial_1\partial_3\left(\frac{\partial F}{\partial u_{13}}\right)+\partial_2\partial_3\left(\frac{\partial F}{\partial u_{23}}\right)-\partial_1\partial_2\partial_3\left(\frac{\partial F}{\partial u_{123}}\right)=0.
$$
Comparison with (\ref{c1pde}) gives the expressions for all first-order derivatives of $F$ which, on integration, leads to the Lagrangian density
$$
F=L+u_{123}\ln\frac{u_{123}-L}{2\sqrt{u_{12}u_{13}u_{23}}}.
$$
Modulo total derivatives, this  density coincides with (\ref{c1F}), so we keep the same notation. $\square$

\bigskip

{\bf Symmetric reduction} of the Darboux system (\ref{c1Dar}) is specified by the condition
\begin{equation*}
\beta _{13}\beta _{32}\beta _{21}=\beta _{12}\beta _{23}\beta _{31},
\end{equation*}
which is equivalent to $L=0$ (up to a reparametrisation, this is also equivalent to $\beta_{ij}=\beta_{ji}$). It can be written as a third-order PDE for $u$ (see e.g. \cite{Egorov1901}, eq. (50), p. 227),
\begin{equation}\label{c1E}
u_{123}^2=4\, {u_{12}u_{13}u_{23}}.
\end{equation}
Equation (\ref{c1E}) can be obtained from Darboux system (\ref{c1Dar}) by setting
$$
\beta_{ij}=\sqrt{u_{ij}}.
$$
Note that reduction $L=0$ is compatible with PDE (\ref{c1PDE}).

\medskip

{ \bf Dispersionless limit } of the Lagrangian density (\ref{c1F}), obtained by setting $u_{123}\to 0$ coincides, modulo simple rescaling, with the second-order Lagrangian density
\begin{equation} \label{c1f}
f=\sqrt{u_{12}u_{13}u_{23}}.
\end{equation}
To be precise, dispersionless limit is obtained by scaling the  variables as $\tilde x^i=\epsilon  x^i, \, \tilde u=\epsilon^2  u$ and passing to the limit $\epsilon \to 0$.  In this case, second-order derivatives remain unchanged, $u_{ij}=\tilde u_{\tilde i \tilde j}$, while third-order derivatives acquire a factor of $\epsilon$,  $u_{ijk}=\epsilon \tilde u_{\tilde i \tilde j \tilde k}$, and therefore should be set equal to zero.

\medskip
{\bf Remark 1.} It would be interesting to find a direct route from the Lagrangian density of Darboux system (\ref{c1Dar}) constructed in \cite{Nijhoff2023}, 
$$
\frac{1}{2}(\beta_{12}\partial_3\beta_{21}-\beta_{21}\partial_3\beta_{12})+\frac{1}{2}(\beta_{23}\partial_1\beta_{32}-\beta_{32}\partial_1\beta_{23})+\frac{1}{2}(\beta_{31}\partial_2\beta_{13}-\beta_{13}\partial_2\beta_{31})
+\beta_{12}\beta_{23}\beta_{31}-\beta_{13}\beta_{32}\beta_{21},
$$
to the Lagrangian density $F$ of Proposition \ref{pc1}.

\medskip
{\bf Remark 2.} In expanded form, Equation \eqref{c1PDE} gives
\begin{equation*}
\begin{array}{c}
 u_{112233}{\left( 4 u_{12} u_{13} u_{23} - u_{123}^2 \right)}+ 24(u_{12} u_{13} u_{23})^2+4 u_{123}^4 -2 u_{1123}u_{1223} u_{1233}
\\
=
 (2 u_{12} u_{13} u_{2233}-u_{123} u_{12233})\left( u_{1123}-2 u_{12} u_{13}  \right)
 \\
 +( 2 u_{12} u_{23}u_{1133}- u_{123}u_{11233})\left( u_{1223}- 2 u_{12} u_{23}  \right)
 \\
 +( 2 u_{13} u_{23} u_{1122}-u_{123} u_{11223}) \left(u_{1233} -2 u_{13} u_{23}  \right) 
 \\
  + 20 u_{12} u_{13} u_{23} \left( u_{12} u_{1233} + u_{13} u_{1223}+u_{23} u_{1123}   \right)
  \\
   - 4  u_{12}u_{23}u_{1123}u_{1233} -4 u_{12} u_{13} u_{1223} u_{1233}- 4 u_{13} u_{23}u_{1123} u_{1223}
   \\
 +\frac{12 u_{12} u_{13} u_{23}}{u_{123}^2-4 u_{12} u_{13} u_{23}} \left(u_{1123} -2 u_{12} u_{13}  \right)   \left(u_{1223}- 2 u_{12} u_{23}  \right)  \left( u_{1233} -2 u_{13} u_{23} \right).
\end{array}
\end{equation*}

\subsection{Differential-difference case (one discrete variable)}
\label{s:2+1}

Here the starting point is a differential-difference linear system
\begin{equation}\label{c2Lame}
\begin{array}{c}
\partial _{1}H_{2}=\beta _{12}H_{1},\quad  \partial _{1}H_{3}=\beta _{13}H_{1},\\
\partial _{2}H_{1}=\beta _{21}H_{2},\quad  \partial _{2}H_{3}=\beta _{23}H_{2},\\
\triangle_{3}H_{1}=\beta _{31}H_{3},\quad  \triangle _{3}H_{2}=\beta _{32}H_{3},\\
\end{array}
\end{equation}
where $\triangle_3=T_3-1$ is the discrete $x^3$-derivative and $T_3$ denotes  unit shift in the discrete variable $x^3$. The compatibility conditions lead to the differential-difference Darboux system,
\begin{equation}\label{c2Dar}
\begin{array}{c}
\partial _{1}\beta _{23}=\beta _{21}\beta _{13}, \quad  \partial _{1}\beta _{32}=\beta _{31}T_3\beta _{12}, \\
\partial _{2}\beta _{13}=\beta _{12}\beta _{23}, \quad  \partial _{2}\beta _{31}=\beta _{32}T_3\beta _{21}, \\
\triangle_{3}\beta _{12}=\beta _{13}\beta _{32}, \quad  \triangle _{3}\beta _{21}=\beta _{23}\beta _{31}.
\end{array}
\end{equation}
Let us introduce a potential $u$ via the relations 
\begin{equation}\label{c2u}
u_{12}=\beta _{12}\beta _{21}, \quad \triangle_3u_{1}=\beta _{13}\beta _{31}, \quad \triangle_3u_{2}=\beta _{23}\beta _{32},
\end{equation}
which are compatible modulo the Darboux system (\ref{c2Dar}). Note that although we are using the same notation as in section \ref{s:3+0}, all variables have now different meaning; hope this will not cause any confusion as notation is restricted to the relevant section.
Introducing the notation $m=\beta_{12}\beta_{23}\beta_{31}$ and $n=\beta_{13}\beta_{32}\beta_{21}$, one has $\triangle_3u_{12}=m+n+\triangle_3u_1\triangle_3u_2$ and $mn=u_{12}\triangle_3u_{1}\triangle_3u_{2}$. Solving for $m$ and $n$ one obtains 
\begin{equation}\label{c2pq}
m=\frac{\triangle_3u_{12}-\triangle_3u_1\triangle_3u_2-L}{2}, \quad n=\frac{\triangle_3u_{12}-\triangle_3u_1\triangle_3u_2+L}{2},
\end{equation}
where $L=\sqrt{(\triangle_3u_{12}-\triangle_3u_1\triangle_3u_2)^{2}-4\, u_{12}\triangle_3u_{1}\triangle_3u_{2}}$.

\begin{proposition} \label{pc2} Darboux system (\ref{c2Dar}) can be written as a single differential-difference equation for the potential $u$,
\begin{equation}\label{c2PDE}
\begin{array}{c}
\partial_1\partial_2\left(-\ln(1+\frac{m}{u_{12}})\right)+\partial_1\triangle_3\left(\frac{L+\triangle_3u_{12}}{2\triangle_3u_{1}}\right)
+\partial_2\triangle_3\left(\frac{L+\triangle_3u_{12}}{2\triangle_3u_{2}}\right)
-\partial_1\partial_2\triangle_3\left(\ln\frac{m}{u_{12}}\right)=0.
\end{array}
\end{equation}
Equation (\ref{c2PDE}) is represented in Euler-Lagrange form corresponding to a Lagrangian  $ \int F \, dx^1dx^2 \delta x^3$, with the Lagrangian density
\begin{equation}\label{c2F}
F=\frac{1}{2}L-u_{12}\ln \left(1+\frac{m}{u_{12}}\right)-\triangle_3u_{12}\ln \left(1+\frac{u_{12}}{m}\right).
\end{equation}
Here integration over the discrete variable $x^3$, denoted $\int \delta x^3$, is understood as summation over all $x^3$-translates of the density $F$.
\end{proposition}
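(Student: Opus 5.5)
We follow the scheme of the proof of Proposition \ref{pc1}, adapted to one discrete direction. First we parametrise relations (\ref{c2u}) as
$$
\beta_{12}=\sqrt{u_{12}}\,e^{\varphi},\ \beta_{21}=\sqrt{u_{12}}\,e^{-\varphi},\quad
\beta_{13}=\sqrt{\triangle_3u_1}\,e^{-\psi},\ \beta_{31}=\sqrt{\triangle_3u_1}\,e^{\psi},\quad
\beta_{23}=\sqrt{\triangle_3u_2}\,e^{\eta},\ \beta_{32}=\sqrt{\triangle_3u_2}\,e^{-\eta}.
$$
Substituting into $m=\beta_{12}\beta_{23}\beta_{31}$ given by (\ref{c2pq}) yields $\varphi+\psi+\eta=\ln\big(m/\sqrt{u_{12}\triangle_3u_1\triangle_3u_2}\big)$. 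This is the analogue of (\ref{c1short}).

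Next we rewrite each equation of (\ref{c2Dar}) as an evolution law for one of the phases. From $\partial_1\beta_{23}=\beta_{21}\beta_{13}$ and $\partial_1\beta_{32}=\beta_{31}T_3\beta_{12}$, dividing the first by $\beta_{23}$ and the second by $\beta_{32}$ and subtracting, we express $2\partial_1\eta$ through $u$. Similarly the second pair gives $\partial_2\psi$. The difference pair gives $T_3\varphi-\varphi$: using $T_3\beta_{12}=\beta_{12}+\beta_{13}\beta_{32}$, we get $e^{2\triangle_3\varphi}=(T_3\beta_{12}\,\beta_{21})/(\beta_{12}\,T_3\beta_{21})$, which is a ratio such as $(u_{12}+n)/(u_{12}+m)$ divided by the corresponding factor. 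The key point is to express everything symmetrically via $m,n$. The identities $m+n=\triangle_3u_{12}-\triangle_3u_1\triangle_3u_2$ and $mn=u_{12}\triangle_3u_1\triangle_3u_2$ should turn each expression into a function of $u_{12},\triangle_3u_1,\triangle_3u_2,\triangle_3u_{12},L$. We expect $\triangle_3\varphi=-\ln(1+m/u_{12})+\tfrac12\triangle_3\ln u_{12}$, up to such terms. We also expect $\partial_2\psi,\partial_1\eta$ to equal $(L+\triangle_3u_{12})/(2\triangle_3u_{1,2})$ modulo derivatives of $\ln\triangle_3u_{1,2}$.

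Then we apply $\partial_1\partial_2\triangle_3$ to the relation for $\varphi+\psi+\eta$. Substituting the three phase laws, and noting that the $\sqrt{\cdot}$ normalisations combine into total-derivative terms, should give (\ref{c2PDE}); the leftover $\ln\sqrt{\cdots}$ terms must cancel. Conversely, given $u$ satisfying (\ref{c2PDE}), one reconstructs the phases by integration, as in the continuous case.

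For the Lagrangian part, we take a density $F(u_{12},\triangle_3u_1,\triangle_3u_2,\triangle_3u_{12})$. Its Euler--Lagrange equation is
$$
\partial_1\partial_2 F_{u_{12}}+\partial_1\triangle_3^{*}F_{\triangle_3u_1}+\partial_2\triangle_3^{*}F_{\triangle_3u_2}-\partial_1\partial_2\triangle_3^{*}F_{\triangle_3u_{12}}=0,
$$
with adjoint $\triangle_3^*=T_3^{-1}-1=-T_3^{-1}\triangle_3$. After an overall shift $T_3$ this has the same shape as (\ref{c2PDE}). We compare coefficients to obtain the four partial derivatives of $F$, check that they are closed, and then verify directly that (\ref{c2F}) has these gradients. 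We use $\partial m/\partial(\cdot)$ computed from $m^2-(m+\ldots)m+\ldots=0$, that is, $\partial_\alpha m=(\partial_\alpha(m+n)\,m-\partial_\alpha(mn))/(m-n)$ with $m-n=-L$. The main obstacle is this last bookkeeping: the phase laws come with gauge terms, and the Euler--Lagrange expressions are determined only modulo total derivatives and shifts. Matching the exact forms in (\ref{c2PDE}) may therefore require adding null Lagrangians such as $\triangle_3u_{12}\ln u_{12}$-type terms. We would first verify the gradient of (\ref{c2F}) directly and only afterwards reconcile with the phase derivation.
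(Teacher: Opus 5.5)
Your route is the paper's: phases $\varphi,\psi,\eta$, the three phase laws, $\partial_1\partial_2\triangle_3$ applied to $\varphi+\psi+\eta=\ln\bigl(m/\sqrt{u_{12}\triangle_3u_1\triangle_3u_2}\bigr)$, then integration of the gradient. Your $\triangle_3\varphi$ law is exactly right; it follows from $e^{2\triangle_3\varphi}=(u_{12}+n)/(u_{12}+m)$ and $(u_{12}+m)(u_{12}+n)=u_{12}T_3u_{12}$.

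The step that fails as written is the Euler--Lagrange operator. Since $(\partial_1\triangle_3)^*=-\partial_1\triangle_3^*$ while $(\partial_1\partial_2\triangle_3)^*=+\partial_1\partial_2\triangle_3^*$, the correct equation is
\[
\partial_1\partial_2F_{u_{12}}-\partial_1\triangle_3^{*}F_{\triangle_3u_1}-\partial_2\triangle_3^{*}F_{\triangle_3u_2}+\partial_1\partial_2\triangle_3^{*}F_{\triangle_3u_{12}}=0,
\]
so your last three signs are reversed. Moreover, after applying $T_3$ the first term becomes $T_3\partial_1\partial_2F_{u_{12}}=\partial_1\partial_2F_{u_{12}}+\partial_1\partial_2\triangle_3F_{u_{12}}$. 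This feeds into the third-order slot:
\[
\partial_1\partial_2F_{u_{12}}+\partial_1\triangle_3F_{\triangle_3u_1}+\partial_2\triangle_3F_{\triangle_3u_2}-\partial_1\partial_2\triangle_3\bigl(F_{\triangle_3u_{12}}-F_{u_{12}}\bigr)=0 .
\]
Comparing with (\ref{c2PDE}) then gives $F_{u_{12}}=-\ln(1+m/u_{12})$, $F_{\triangle_3u_{1}}=(L+\triangle_3u_{12})/(2\triangle_3u_{1})$, $F_{\triangle_3u_{2}}=(L+\triangle_3u_{12})/(2\triangle_3u_{2})$, and $F_{\triangle_3u_{12}}=\ln(m/u_{12})-\ln(1+m/u_{12})=-\ln(1+u_{12}/m)$. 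With your signs, or with a literal ``same shape'' reading $F_{\triangle_3u_{12}}=\ln(m/u_{12})$, the read-off gradient is not closed. For example, $\partial_{\triangle_3u_1}F_{u_{12}}=-\triangle_3u_2/L$, whereas $\partial_{u_{12}}$ of $-(L+\triangle_3u_{12})/(2\triangle_3u_1)$ is $+\triangle_3u_2/L$. Likewise $\partial_{\triangle_3u_{12}}F_{u_{12}}=m/(L(u_{12}+m))\neq m/(Lu_{12})=\partial_{u_{12}}\ln(m/u_{12})$. So your planned closedness check would fail and (\ref{c2F}) would appear wrong. With the corrected operator, (\ref{c2F}) has exactly these partials except that $F_{\triangle_3u_{12}}$ is shifted by $-\tfrac12$. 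That is, it differs from the integrated density by the null term $\tfrac12\triangle_3u_{12}$, as in the paper.

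The phase-law bookkeeping is also slightly off. One has $\partial_2\psi=\frac{L}{2\triangle_3u_1}+\frac12\triangle_3u_2$ and $\partial_1\eta=\frac{L}{2\triangle_3u_2}-\frac12\triangle_3u_1$, not $(L+\triangle_3u_{12})/(2\triangle_3u_{1,2})$ modulo derivatives of $\ln\triangle_3u_{1,2}$. The $\pm\frac12$ terms cancel in the combined equation because $\partial_1\triangle_3(\triangle_3u_2)=\partial_2\triangle_3(\triangle_3u_1)$. The $\ln\sqrt{\cdot}$ terms also do not all cancel. Split the last term as $\ln\frac{m}{u_{12}}+\frac12\ln u_{12}-\frac12\ln\triangle_3u_1-\frac12\ln\triangle_3u_2$. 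The $\frac12\ln u_{12}$ part cancels the $\frac12\triangle_3\ln u_{12}$ in the $\varphi$-law. The other two parts become $\partial_1\triangle_3\bigl(\frac{\triangle_3u_{12}}{2\triangle_3u_1}\bigr)+\partial_2\triangle_3\bigl(\frac{\triangle_3u_{12}}{2\triangle_3u_2}\bigr)$, which is where the $\triangle_3u_{12}$ in the numerators of (\ref{c2PDE}) comes from. Finally, $\triangle_3u_{12}\ln u_{12}$ is not a null Lagrangian in the semi-discrete setting, unlike $u_{123}\ln u_{12}=\partial_3(u_{12}\ln u_{12}-u_{12})$ in the continuous case. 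The only null term needed here is $\tfrac12\triangle_3u_{12}$.
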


\noindent {\it Proof:} Parametrising relations (\ref{c2u}) in the form
$$
\begin{array}{c}
\beta_{12}=\sqrt{u_{12}}\, e^{\varphi}, \quad \beta_{21}=\sqrt{u_{12}}\, e^{-\varphi}, \\
\beta_{13}=\sqrt{\triangle_3u_{1}}\, e^{-\psi}, \quad \beta_{31}=\sqrt{\triangle_3u_{1}}\, e^{\psi}, \\
\beta_{23}=\sqrt{\triangle_3u_{2}}\, e^{\eta}, \quad \beta_{32}=\sqrt{\triangle_3u_{2}}\, e^{-\eta},
\end{array}
$$
and substituting into the expression (\ref{c2pq}) for $m$, we obtain
\begin{equation}\label{c2short}
\varphi+\psi +\eta=\ln\frac{m}{\sqrt{u_{12}\triangle_3u_{1}\triangle_3u_{2}}}.
\end{equation}
Under the same parametrisation, Darboux system  (\ref{c2Dar}) takes the form
$$
\triangle_3\varphi=\frac{1}{2}\ln(1+\frac{\triangle_3u_{12}}{u_{12}})-\ln(1+\frac{m}{u_{12}}), \quad \partial_2\psi=\frac{L}{2\triangle_3u_{1}}+\frac{1}{2}\triangle_3u_2, \quad \partial_1\eta=\frac{L}{2\triangle_3u_{2}}-\frac{1}{2}\triangle_3u_1.
$$
Applying to (\ref{c2short}) the operator $\partial_1\partial_2\triangle_3$, one obtains a differential-difference equation in terms of $u$,
\begin{equation*}
\begin{array}{c}
\partial_1\partial_2\left(\frac{1}{2}\ln(1+\frac{\triangle_3u_{12}}{u_{12}})-\ln(1+\frac{m}{u_{12}})\right)+\partial_1\triangle_3\left(\frac{L}{2\triangle_3u_{1}}+\frac{1}{2}\triangle_3u_2\right)
+\partial_2\triangle_3\left(\frac{L}{2\triangle_3u_{2}}-\frac{1}{2}\triangle_3u_1\right)\\
-\partial_1\partial_2\triangle_3\left(\ln\frac{m}{\sqrt{u_{12}\triangle_3u_{1}\triangle_3u_{2}}}\right)=0.
\end{array}
\end{equation*}
On cancellation in the two middle terms, this equation takes the form
\begin{equation*}
\begin{array}{c}
\partial_1\partial_2\left(\frac{1}{2}\ln(1+\frac{\triangle_3u_{12}}{u_{12}})-\ln(1+\frac{m}{u_{12}})\right)+\partial_1\triangle_3\left(\frac{L}{2\triangle_3u_{1}}\right)
+\partial_2\triangle_3\left(\frac{L}{2\triangle_3u_{2}}\right)\\
-\partial_1\partial_2\triangle_3\left(\ln\frac{m}{\sqrt{u_{12}\triangle_3u_{1}\triangle_3u_{2}}}\right)=0.
\end{array}
\end{equation*}
Although it is already in Euler-Lagrange form,  we can simplify it by rewriting  the last term as
$$
\begin{array}{c}
-\partial_1\partial_2\triangle_3\left(\ln\frac{m}{\sqrt{u_{12}\triangle_3u_{1}\triangle_3u_{2}}}\right)=-\partial_1\partial_2\triangle_3\left(\ln\frac{m}{u_{12}}+\frac{1}{2}\ln u_{12}-\frac{1}{2}\ln \triangle_3u_{1}-\frac{1}{2}\ln \triangle_3 u_{2}\right)\\
=-\partial_1\partial_2\triangle_3\left(\ln\frac{m}{u_{12}}\right)-\partial_1\partial_2\left(\frac{1}{2}\ln(1+\frac{\triangle_3u_{12}}{u_{12}})\right)+\partial_1\triangle_3\left(\frac{\triangle_3u_{12}}{2\triangle_3u_1}\right)+\partial_2\triangle_3\left(\frac{\triangle_3u_{12}}{2\triangle_3u_2}\right),
\end{array}
$$
which, on rearrangement, results in the equivalent Euler-Lagrange form (\ref{c2PDE}).
To reconstruct the corresponding Lagrangian,  note that for a Lagrangian density  of the form $F=F(u_{12},\, \triangle_3u_1,\, \triangle_3u_2,\, \triangle_3u_{12})$, the corresponding Euler-Lagrange equation is 
$$
\partial_1\partial_2\left(\frac{\partial F}{\partial u_{12}}\right)+\partial_1\triangle_3\left(\frac{\partial F}{\partial (\triangle_3u_{1})}\right)+\partial_2\triangle_3\left(\frac{\partial F}{\partial (\triangle_3u_{2})}\right)-\partial_1\partial_2\triangle_3\left(\frac{\partial F}{\partial (\triangle_3u_{12})}-\frac{\partial F}{\partial u_{12}}\right)=0.
$$
Comparing this with (\ref{c2PDE}) gives the expressions for all first-order derivatives of $F$ which, on integration, leads to the Lagrangian density
$$
F=\frac{1}{2}L-u_{12}\ln \left(1+\frac{m}{u_{12}}\right)-\triangle_3u_{12}\ln \left(1+\frac{u_{12}}{m}\right)+\frac{1}{2}\triangle_3u_{12}.
$$
This is equivalent to (\ref{c2F}) (modulo unessential total derivative term that does not effect the Euler-Lagrange equation). $\square$

\bigskip

{ \bf Symmetric reduction} (differential-difference version with one discrete variable)  of the Darboux system (\ref{c2Dar}) is specified by the condition
\begin{equation*}
\beta _{13}\beta _{32}\beta _{21}=\beta _{12}\beta _{23}\beta _{31},
\end{equation*}
which is equivalent to $L=0$. It can be written as 
\begin{equation}\label{c2E}
(\triangle_3u_{12}-\triangle_3u_1\triangle_3u_2)^2=4\, {u_{12}\triangle_3u_{1}\triangle_3u_{2}}.
\end{equation}
Equation (\ref{c2E}) can be obtained from Darboux system (\ref{c2Dar}) by setting
$$
\begin{array}{c}
\beta_{12}=\beta_{21}=\sqrt{u_{12}}, \\
\ \\
 \beta_{13}=\sqrt{\triangle_3u_{1}}\, e^{-\frac{\triangle_3u}{2}}, \quad \beta_{31}=\sqrt{\triangle_3u_{1}}\, e^{\frac{\triangle_3u}{2}}, \quad
\beta_{23}=\sqrt{\triangle_3u_{2}}\, e^{-\frac{\triangle_3u}{2}}, \quad \beta_{32}=\sqrt{\triangle_3u_{2}}\, e^{\frac{\triangle_3u}{2}}.
\end{array}
$$
Note that reduction $L=0$ is compatible with equation (\ref{c2PDE}).

\medskip

{ \bf Dispersionless limit} of the Lagrangian density (\ref{c2F}), obtained by setting $\triangle_3u_{12}\to 0$ and $\triangle_3\to \partial_3$, coincides with the second-order Lagrangian density
\begin{equation} \label{c2f}
f=\frac{1}{2}l-u_{12}\ln\left(1-\frac{u_{13}u_{23}+l}{2u_{12}} \right),
\end{equation}
where $l=\sqrt{u_{13}^2u_{23}^2-4u_{12}u_{13}u_{23}}$. Modulo total derivatives, this Lagrangian density is equivalent to
\begin{equation*} 
f=\frac{1}{2}l-2\, u_{12}\arctanh \frac{l}{u_{13}u_{23}}=\frac{1}{2}u_{13}u_{23}\sqrt{1-\frac{4u_{12}}{u_{13}u_{23}}}-2\, u_{12}\arctanh \sqrt{1-\frac{4u_{12}}{u_{13}u_{23}}}.
\end{equation*}
To be precise, dispersionless limit is obtained by scaling the  variables as $\tilde x^i=\epsilon  x^i, \, \tilde u=\epsilon^2  u$ and passing to the limit $\epsilon \to 0$.  Under this rescaling, unit shift  $T_3$ in the discrete variable $x^3$ becomes $\epsilon$-shift  $T_{\tilde 3}$,
second-order derivatives (both discrete and continuous) remain unchanged,  while third-order derivatives acquire a factor of $\epsilon$ and therefore should be set equal to zero.

\subsection{Differential-difference case (two discrete variables)}
\label{s:1+2}

Here the starting point is a differential-difference linear system
\begin{equation}\label{c3Lame}
\begin{array}{c}
\partial _{1}H_{2}=\beta _{12}H_{1},\quad  \partial _{1}H_{3}=\beta _{13}H_{1},\\
\triangle_{2}H_{1}=\beta _{21}H_{2},\quad  \triangle _{2}H_{3}=\beta _{23}H_{2},\\
\triangle_{3}H_{1}=\beta _{31}H_{3},\quad  \triangle _{3}H_{2}=\beta _{32}H_{3},\\
\end{array}
\end{equation}
where $\triangle_2$ and $\triangle_3$ denote discrete derivatives in the variables $x^2$ and $x^3$, respectively. The compatibility conditions lead to the differential-difference Darboux system,
\begin{equation}\label{c3Dar}
\begin{array}{c}
\partial _{1}\beta _{23}=\beta _{21}T_2\beta _{13}, \quad  \partial _{1}\beta _{32}=\beta _{31}T_3\beta _{12}, \\
\triangle_{2}\beta _{13}=\beta _{12}\beta _{23}, \quad  \triangle_{2}\beta _{31}=\beta _{32}T_3\beta _{21}, \\
\triangle_{3}\beta _{12}=\beta _{13}\beta _{32}, \quad  \triangle _{3}\beta _{21}=\beta _{23}T_2\beta _{31}.
\end{array}
\end{equation}
Let us introduce a potential $u$ via the relations 
\begin{equation}\label{c3u}
\triangle_2u_{1}=\beta _{12}\beta _{21}, \quad \triangle_3u_{1}=\beta _{13}\beta _{31}, \quad \triangle_2\triangle_3u=-\ln(1-\beta_{23}\beta_{32}),
\end{equation}
which are compatible modulo the Darboux system (\ref{c3Dar}). Introducing the notation $m=\beta_{12}\beta_{23}\beta_{31}$ and $n=\beta_{13}\beta_{32}\beta_{21}$, one has 
$$
\begin{array}{c}
\triangle_2\triangle_3u_1=(m+n+\triangle_2u_1+\triangle_3u_1)e^{\triangle_2\triangle_3u}-\triangle_2u_1-\triangle_3u_1, \\
mn=\triangle_2u_1\triangle_3u_1(1-e^{-\triangle_2\triangle_3u}).
\end{array}
$$
 Solving for $m$ and $n$ one obtains 
 \begin{equation}\label{c3pq}
 m=\frac{b-\sqrt{b^2-4c}}{2}, \quad n=\frac{b+\sqrt{b^2-4c}}{2},
 \end{equation}
where
 $$
 \begin{array}{c}
 b=(\triangle_2\triangle_3u_1+\triangle_2u_1+\triangle_3u_1)e^{-\triangle_2\triangle_3u}-\triangle_2u_1-\triangle_3u_1, \\
 c=\triangle_2u_1\triangle_3u_1(1-e^{-\triangle_2\triangle_3u}).
 \end{array}
 $$

\begin{proposition} \label{pc3} Darboux system (\ref{c3Dar}) can be written as a single differential-difference equation for $u$,
\begin{equation}\label{c3PDE}
\begin{array}{c}
\partial_1\triangle_2\left( -\ln (1+\frac{m}{\triangle_2u_1})-\frac{1}{2}\triangle_2\triangle_3u\right)+\partial_1\triangle_3\left(-\ln (1+\frac{m}{\triangle_3u_1})-\frac{1}{2}\triangle_2\triangle_3u \right)\\
+\triangle_2\triangle_3\left( -\frac{\triangle_2\triangle_3u_1}{e^{\triangle_2\triangle_3u}-1}+\frac{\triangle_2u_1\triangle_3u_1}{m}+\frac{1}{2}\triangle_2u_1+\frac{1}{2}\triangle_3u_1 \right)\\
-\partial_1\triangle_2\triangle_3\left( \ln\frac{m}{\triangle_2u_{1}\triangle_3u_{1}(1-e^{-\triangle_2\triangle_3u})}  \right)=0.
\end{array}
\end{equation}
Equation (\ref{c3PDE}) is represented in Euler-Lagrange form corresponding to a Lagrangian $ \int F\, dx^1\delta x^2\delta x^3$, with the Lagrangian density
\begin{equation}\label{c3F}
\begin{array}{c}
F=\triangle_2 u_1 \ln \left(1+\frac{m}{\triangle_2 u_1}\right)
+\triangle_3 u_1 \ln \left(1+\frac{m}{\triangle_3 u_1}\right)
+\frac{1}{2} (\triangle_2 u_1 +\triangle_3 u_1 )\triangle_2 \triangle_3 u\\
+\triangle_2 \triangle_3 u_1 \ln \left(\triangle_2 \triangle_3 u_1-m\right).
\end{array}
\end{equation}
Here integration over the discrete variables $x^2, x^3$, denoted $ \int \delta x^2 \delta x^3$, is understood as summation over all $x^2, x^3$-translates of the density $F$.
\end{proposition}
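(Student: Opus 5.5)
I will follow the same scheme as in the proofs of Propositions \ref{pc1} and \ref{pc2}. I begin by parametrising relations (\ref{c3u}) as
$$
\beta_{12}=\sqrt{\triangle_2u_1}\,e^{\varphi},\quad \beta_{21}=\sqrt{\triangle_2u_1}\,e^{-\varphi},\quad
\beta_{13}=\sqrt{\triangle_3u_1}\,e^{-\psi},\quad \beta_{31}=\sqrt{\triangle_3u_1}\,e^{\psi},
$$
$$
\beta_{23}=\sqrt{1-e^{-\triangle_2\triangle_3u}}\,e^{\eta},\quad \beta_{32}=\sqrt{1-e^{-\triangle_2\triangle_3u}}\,e^{-\eta}.
$$
Substituting into $m=\beta_{12}\beta_{23}\beta_{31}$ then gives the analogue of (\ref{c2short}),
$$
\varphi+\psi+\eta=\frac12\ln\frac{m^2}{\triangle_2u_1\triangle_3u_1(1-e^{-\triangle_2\triangle_3u})},
$$
with $m$ given by (\ref{c3pq}).

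The next step is to rewrite the six equations (\ref{c3Dar}) in terms of $\varphi,\psi,\eta$. I want only the three ``transversal'' equations: $\triangle_3\varphi$, $\triangle_2\psi$ and $\partial_1\eta$, each expressed through $u$ alone. For example, $\triangle_3\beta_{12}=\beta_{13}\beta_{32}$ and $\triangle_3\beta_{21}=\beta_{23}T_2\beta_{31}$ give $T_3\beta_{12}$ and $T_3\beta_{21}$. Taking the ratio gives $e^{2T_3\varphi}$, and the product reproduces $T_3\triangle_2u_1$. Dividing by $\beta_{12}/\beta_{21}=e^{2\varphi}$ then yields
$$
\triangle_3\varphi=\tfrac12\ln\frac{1+m/\triangle_2u_1}{1+n/\triangle_2u_1}\,,
$$
where I have used $\beta_{13}\beta_{32}/\beta_{12}=n/\triangle_2u_1$ and similar identities. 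Using $mn=c$ and the relation for $\triangle_2\triangle_3u_1$, this should become $-\ln(1+m/\triangle_2u_1)$ plus a term that is a pure $\triangle_2$-, $\triangle_3$- or $\partial_1$-derivative of something. That extra term will cancel or regroup later, exactly as the $\tfrac12\ln(1+\triangle_3u_{12}/u_{12})$ term did in Proposition \ref{pc2}. The same is done for $\triangle_2\psi$, which should involve $\ln(1+m/\triangle_3u_1)$, and for $\partial_1\eta$, which should involve $\triangle_2u_1\triangle_3u_1/m-\triangle_2\triangle_3u_1/(e^{\triangle_2\triangle_3u}-1)$. Throughout I use the companion equations to express shifted $\beta$'s.

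I then apply $\partial_1\triangle_2\triangle_3$ to the identity for $\varphi+\psi+\eta$. Since the operators commute, this gives $\partial_1\triangle_2(\triangle_3\varphi)+\partial_1\triangle_3(\triangle_2\psi)+\triangle_2\triangle_3(\partial_1\eta)$ on the left-hand side, and hence a single equation for $u$. As in the previous case, I split $\ln\frac{m}{\triangle_2u_1\triangle_3u_1(1-e^{-\triangle_2\triangle_3u})}$ and its square-root version, and move total-derivative pieces between terms. This should produce the form (\ref{c3PDE}), including the $\pm\tfrac12\triangle_2\triangle_3u$ and $\tfrac12\triangle_2u_1+\tfrac12\triangle_3u_1$ corrections. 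Equivalence with (\ref{c3Dar}) follows because the three transversal equations plus the definition of $u$ recover the $\beta$'s up to the gauge freedom. This is the same level of rigour as in Propositions \ref{pc1} and \ref{pc2}.

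For the Lagrangian, consider a density $F=F(\triangle_2u_1,\triangle_3u_1,\triangle_2\triangle_3u,\triangle_2\triangle_3u_1)$. Its Euler--Lagrange operator is obtained by discrete integration by parts, using adjoints $\triangle_i^*=-T_i^{-1}\triangle_i$. Modulo shifts, which are irrelevant under $\sum$, it reads
$$
\partial_1\triangle_2 F_{\triangle_2u_1}+\partial_1\triangle_3 F_{\triangle_3u_1}+\triangle_2\triangle_3F_{\triangle_2\triangle_3u}-\partial_1\triangle_2\triangle_3F_{\triangle_2\triangle_3u_1}=0,
$$
with care about shift placement as in Proposition \ref{pc2}. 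Comparing with (\ref{c3PDE}) prescribes the four partial derivatives of $F$. I would then verify directly that (\ref{c3F}) has these derivatives, differentiating $m$ implicitly via $m^2-bm+c=0$. For instance, $\partial F/\partial(\triangle_2\triangle_3u_1)=\ln(\triangle_2\triangle_3u_1-m)+\dots$ must match $\ln\frac{m}{c}$ up to a constant, using $\triangle_2\triangle_3u_1-m=$ an expression in $n$ and $e^{\triangle_2\triangle_3u}$.

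I expect the main obstacle to be twofold. The first is the bookkeeping of the shifted $\beta$'s when deriving $\triangle_3\varphi$, $\triangle_2\psi$ and $\partial_1\eta$. The second is confirming that the leftover non-Lagrangian-looking pieces are exact discrete derivatives, so that the equation can be brought into the precise form (\ref{c3PDE}). For this I would first check the implicit-derivative identities for $m$ symbolically, since every subsequent cancellation depends on them.
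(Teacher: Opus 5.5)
Your strategy is the paper's: the same parametrisation and the same transversal equations for $\triangle_3\varphi$, $\triangle_2\psi$, $\partial_1\eta$, followed by applying $\partial_1\triangle_2\triangle_3$, redistributing linear terms and reconstructing $F$. However, the two formulas you actually commit to are wrong, and the second one would make your final verification fail.

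First, $T_3\beta_{21}$ cannot be read off from $\triangle_3\beta_{21}=\beta_{23}T_2\beta_{31}$ alone. That equation and $\triangle_2\beta_{31}=\beta_{32}T_3\beta_{21}$ must be solved jointly, giving $T_3\beta_{21}=(\beta_{21}+\beta_{23}\beta_{31})e^{\triangle_2\triangle_3u}$ (using $1-\beta_{23}\beta_{32}=e^{-\triangle_2\triangle_3u}$). Combined with your own identity $T_3\beta_{12}/\beta_{12}=1+n/\triangle_2u_1$, this yields $\triangle_3\varphi=\frac12\ln\frac{1+n/\triangle_2u_1}{1+m/\triangle_2u_1}-\frac12\triangle_2\triangle_3u$. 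Your expression has the ratio inverted and lacks the $-\frac12\triangle_2\triangle_3u$ term. Taken literally, it leads to $+\ln(1+m/\triangle_2u_1)$ rather than the $-\ln(1+m/\triangle_2u_1)$ you anticipate. With the corrected formula and $(\triangle_2u_1+m)(\triangle_2u_1+n)=\triangle_2u_1(\triangle_2u_1+\triangle_2\triangle_3u_1)e^{-\triangle_2\triangle_3u}$, you recover exactly the paper's expression for $\triangle_3\varphi$.

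Second, and more seriously, your Euler--Lagrange operator is incorrect. Shifts can only be removed by applying one common shift to the whole expression, not term by term. Integration by parts gives $T_2^{-1}\partial_1\triangle_2F_{\triangle_2u_1}+T_3^{-1}\partial_1\triangle_3F_{\triangle_3u_1}+T_2^{-1}T_3^{-1}(\triangle_2\triangle_3F_{\triangle_2\triangle_3u}-\partial_1\triangle_2\triangle_3F_{\triangle_2\triangle_3u_1})$. Multiplying by $T_2T_3$ and writing $T_i=1+\triangle_i$ produces extra terms, so the last coefficient is $F_{\triangle_2\triangle_3u_1}-F_{\triangle_2u_1}-F_{\triangle_3u_1}$, as in the paper. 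The same mechanism produced $F_{\triangle_3u_{12}}-F_{u_{12}}$ in Proposition \ref{pc2}.

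This is not cosmetic. For (\ref{c3F}) one finds $F_{\triangle_2u_1}=\ln(1+\frac{m}{\triangle_2u_1})+\frac12\triangle_2\triangle_3u$, similarly for $F_{\triangle_3u_1}$, and $F_{\triangle_2\triangle_3u_1}=1+\ln(\triangle_2\triangle_3u_1-m)$. Since $\triangle_2\triangle_3u_1-m=\frac{n(\triangle_2u_1+m)(\triangle_3u_1+m)}{\triangle_2u_1\triangle_3u_1}e^{\triangle_2\triangle_3u}$ and $n=c/m$, this gives $F_{\triangle_2\triangle_3u_1}-F_{\triangle_2u_1}-F_{\triangle_3u_1}=1-\ln\frac{m}{c}$. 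Thus $F_{\triangle_2\triangle_3u_1}$ by itself differs from $\pm\ln\frac{m}{c}$ by the non-constant amount $F_{\triangle_2u_1}+F_{\triangle_3u_1}$, and the check you describe would fail. With the correct operator everything matches. The overall sign is the ``unessential factor'' $-1$ between (\ref{c3F}) and the Lagrangian read off from (\ref{c3PDE}).
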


\noindent {\it Proof:} Parametrising relations (\ref{c3u}) in the form
$$
\begin{array}{c}
\beta_{12}=\sqrt{\triangle_2u_1}\, e^{\varphi}, \quad \beta_{21}=\sqrt{\triangle_2u_1}\, e^{-\varphi}, \\
\beta_{13}=\sqrt{\triangle_3u_{1}}\, e^{-\psi}, \quad \beta_{31}=\sqrt{\triangle_3u_{1}}\, e^{\psi}, \\
\beta_{23}=\sqrt{1-e^{-\triangle_2\triangle_3u}}\, e^{\eta}, \quad \beta_{32}=\sqrt{1-e^{-\triangle_2\triangle_3u}}\, e^{-\eta},
\end{array}
$$
and substituting into the expressions for $m$, we obtain
\begin{equation}\label{c3short}
\varphi+\psi +\eta=\ln\frac{m}{\sqrt{\triangle_2u_{1}\triangle_3u_{1}(1-e^{-\triangle_2\triangle_3u})}}.
\end{equation}
Under the same parametrisation, Darboux system  (\ref{c3Dar}) takes the form
$$
\begin{array}{c}
\triangle_3\varphi=\ln \frac{\sqrt{\triangle_2u_1}\sqrt{\triangle_2\triangle_3u_1+\triangle_2u_1}}{m+\triangle_2u_1}-\triangle_2\triangle_3u, \quad 
\triangle_2\psi=\ln \frac{\sqrt{\triangle_3u_1}\sqrt{\triangle_2\triangle_3u_1+\triangle_3u_1}}{m+\triangle_3u_1}, \\
\ \\
\partial_1\eta=-\frac{1}{2}\frac{\triangle_2\triangle_3u_1}{e^{\triangle_2\triangle_3u}-1}+\frac{\triangle_2u_1\triangle_3u_1}{m}+\triangle_2u_1.
\end{array}
$$
Applying to (\ref{c3short}) the operator $\partial_1\triangle_2\triangle_3$, one obtains a differential-difference equation in terms of $u$,
\begin{equation*}
\begin{array}{c}
\partial_1\triangle_2\left( \ln \frac{\sqrt{\triangle_2u_1}\sqrt{\triangle_2\triangle_3u_1+\triangle_2u_1}}{m+\triangle_2u_1}-\triangle_2\triangle_3u \right)+\partial_1\triangle_3\left(\ln \frac{\sqrt{\triangle_3u_1}\sqrt{\triangle_2\triangle_3u_1+\triangle_3u_1}}{m+\triangle_3u_1} \right)\\
+\triangle_2\triangle_3\left( -\frac{1}{2}\frac{\triangle_2\triangle_3u_1}{e^{\triangle_2\triangle_3u}-1}+\frac{\triangle_2u_1\triangle_3u_1}{m}+\triangle_2u_1  \right)
-\partial_1\triangle_2\triangle_3\left( \ln\frac{m}{\sqrt{\triangle_2u_{1}\triangle_3u_{1}(1-e^{-\triangle_2\triangle_3u})}}  \right)=0.
\end{array}
\end{equation*}
By rearranging linear terms, one can rewrite this equation in the equivalent (and more symmetric) Euler-Lagrange form, 
\begin{equation}\label{c3pde}
\begin{array}{c}
\partial_1\triangle_2\left( \ln \frac{\sqrt{\triangle_2u_1}\sqrt{\triangle_2\triangle_3u_1+\triangle_2u_1}}{m+\triangle_2u_1}-\frac{1}{2}\triangle_2\triangle_3u\right)+\partial_1\triangle_3\left(\ln \frac{\sqrt{\triangle_3u_1}\sqrt{\triangle_2\triangle_3u_1+\triangle_3u_1}}{m+\triangle_3u_1}-\frac{1}{2}\triangle_2\triangle_3u \right)\\
+\triangle_2\triangle_3\left( -\frac{1}{2}\frac{\triangle_2\triangle_3u_1}{e^{\triangle_2\triangle_3u}-1}+\frac{\triangle_2u_1\triangle_3u_1}{m}+\frac{1}{2}\triangle_2u_1+\frac{1}{2}\triangle_3u_1 \right)\\
-\partial_1\triangle_2\triangle_3\left( \ln\frac{m}{\sqrt{\triangle_2u_{1}\triangle_3u_{1}(1-e^{-\triangle_2\triangle_3u})}}  \right)=0.
\end{array}
\end{equation}
This representation can be simplified if we rewrite  the last term as
$$
\begin{array}{c}
-\partial_1\triangle_2\triangle_3\left( \ln\frac{m}{\sqrt{\triangle_2u_{1}\triangle_3u_{1}(1-e^{-\triangle_2\triangle_3u})}}  \right) \\
=-\partial_1\triangle_2\triangle_3\left( \ln\frac{m}{\triangle_2u_{1}\triangle_3u_{1}(1-e^{-\triangle_2\triangle_3u})} +\ln \sqrt{\triangle_2u_1}+\ln \sqrt{ \triangle_3u_1}
+\ln \sqrt{1-e^{-\triangle_2\triangle_3u}}  \right) \\
=-\partial_1\triangle_2\triangle_3\left( \ln\frac{m}{\triangle_2u_{1}\triangle_3u_{1}(1-e^{-\triangle_2\triangle_3u})}\right)-\partial_1\triangle_2\left(\ln \frac{\sqrt{\triangle_2\triangle_3u_1+\triangle_2u_1}}{\sqrt{\triangle_2u_1}}\right)-\partial_1\triangle_3\left(\ln \frac{\sqrt{\triangle_2\triangle_3u_1+\triangle_3u_1}}{\sqrt{\triangle_3u_1}}\right)\\
-\frac{1}{2}\triangle_2\triangle_3\left(\frac{\triangle_2\triangle_3u_1}{e^{\triangle_2\triangle_3u}-1}\right),
\end{array}
$$
which, on rearrangement, results in the equivalent Euler-Lagrange form (\ref{c3PDE}). 
To reconstruct the corresponding Lagrangian, note  that for a Lagrangian density   $F=F(\triangle_2u_1,\ \triangle_3u_1,\ \triangle_2\triangle_3u,\ \triangle_2\triangle_3u_{1})$, the corresponding  Euler-Lagrange equation is
$$
\begin{array}{c}
\partial_1\triangle_2\left(\frac{\partial F}{\partial (\triangle_2u_{1})}\right)+\partial_1\triangle_3\left(\frac{\partial F}{\partial (\triangle_3u_{1})}\right)+\triangle_2\triangle_3\left(\frac{\partial F}{\partial (\triangle_2\triangle_3 u)}\right)\\
-\partial_1\triangle_2\triangle_3\left(\frac{\partial F}{\partial (\triangle_2\triangle_3u_{1})}-\frac{\partial F}{\partial (\triangle_2u_{1})}-\frac{\partial F}{\partial (\triangle_3u_{1})}\right)=0.
\end{array}
$$
Comparing this with (\ref{c3PDE}) gives the expressions for all first-order derivatives of $F$ which, on integration, leads to the Lagrangian density
(\ref{c3F}) (modulo unessential  factor and total derivatives that do not effect the Euler-Lagrange equation).  $\square$

\bigskip

{ \bf Symmetric reduction} (differential-difference version with two discrete variables)  of the Darboux system (\ref{c3Dar}) is specified by the condition
\begin{equation*}
\beta _{13}\beta _{32}\beta _{21}=\beta _{12}\beta _{23}\beta _{31},
\end{equation*}
which can be written as
\begin{equation}\label{c3E}
(\triangle_2\triangle_3u_1+\triangle_2u_1+\triangle_3u_1)e^{-\triangle_2\triangle_3u}-\triangle_2u_1-\triangle_3u_1
=2\sqrt{\triangle_2u_1\triangle_3u_1(1-e^{-\triangle_2\triangle_3u})}.
\end{equation}
Equation (\ref{c3E}) can be obtained from Darboux system (\ref{c3Dar}) by setting
$$
\begin{array}{c}
\beta_{12}=\sqrt{\triangle_2u_1}\, e^{-\frac{\triangle_2u}{2}}, \quad \beta_{21}=\sqrt{\triangle_2u_1}\, e^{\frac{\triangle_2u}{2}}, \\
\ \\
 \beta_{13}=\sqrt{\triangle_3u_{1}}\, e^{-\frac{\triangle_3u}{2}}, \quad \beta_{31}=\sqrt{\triangle_3u_{1}}\, e^{\frac{\triangle_3u}{2}}, \\
 \ \\
\beta_{23}=\sqrt{1-e^{-\triangle_2\triangle_3u}}\, e^{\frac{\triangle_2u-\triangle_3u}{2}}, \quad \beta_{32}=\sqrt{1-e^{-\triangle_2\triangle_3u}}\, e^{\frac{\triangle_3u-\triangle_2u}{2}}.
\end{array}
$$

\medskip

{ \bf Dispersionless limit} of the Lagrangian density (\ref{c3F}), obtained by setting $\triangle_2\triangle_3u_{1}\to 0$ and $\triangle_2\to \partial_2, \ \triangle_3\to \partial_3$, coincides with the second-order Lagrangian density
\begin{equation}\label{c3f}
\begin{array}{c}
f=
u_{12} \ln \left(1+\frac{2 u_{13} }{L-u_{12}-u_{13}  }\right)
+ u_{13} \ln \left(1+\frac{2 u_{12} }{L-u_{12}-u_{13}  }\right)
+\frac{1}{2} (u_{12} + u_{13}  )u_{23},
\end{array}
\end{equation}
where $L=\sqrt{u_{12}^{2}+u_{13}^{2}- 2\, u_{12}u_{13}\coth \frac{u_{23}}{2}}$.
Using $u_{23}=\ln\frac{(L+u_{12}+u_{13})(L-u_{12}-u_{13})}{(L+u_{12}-u_{13})(L-u_{12}+u_{13})}$,
this Lagrangian density is equivalent to 
\begin{equation*}
\begin{array}{c}
f= (u_{13}+u_{12}) {\rm \ arccoth \ }\frac{L}{u_{13}+u_{12}}-(u_{13}-u_{12}) {\rm \ arccoth \ } \frac{L}{u_{13}-u_{12}}.
\end{array}
\end{equation*}

\subsection{Discrete case}
\label{s:0+3}

We begin with a discrete linear system
\begin{equation}\label{c4Lame}
\begin{array}{c}
\triangle_{i}H_{j}=\beta _{ij}H_{i}, \end{array}
\end{equation}
whose compatibility conditions lead to the discrete Darboux system,
\begin{equation}\label{c4Dar}
\begin{array}{c}
\triangle _{i}\beta _{jk}=\beta _{ji}T_j\beta _{ik}, 
\end{array}
\end{equation}
$i\ne j\ne k \in \{1,2,3\}$. Here $\triangle_i=T_i-1$ is the discrete $x^i$-derivative and $T_i$ denotes  unit shift in discrete variable $x^i$.  
One also has the relations
$$
T_i\beta_{jk}=\frac{\beta_{ji}\beta_{ik}+\beta_{jk}}{1-\beta_{ij}\beta_{ji}}
$$
that follow from the Darboux system (\ref{c4Dar}). 
In the discrete case,  potential $u$ is defined via the relations 
\begin{equation}\label{c4u}
\triangle_1\triangle_2u=-\ln(1-\beta _{12}\beta _{21}), \quad \triangle_1\triangle_3u=-\ln(1-\beta _{13}\beta _{31}), \quad \triangle_2\triangle_3u=-\ln(1-\beta _{23}\beta _{32}),
\end{equation}
which are compatible modulo  Darboux system (\ref{c4Dar}); see (\cite{D2021}, formula (2.5) where  $u=-\ln \tau$). 
Introducing the notation $m=\beta_{12}\beta_{23}\beta_{31}$ and $n=\beta_{13}\beta_{32}\beta_{21}$, one has 
$$
\begin{array}{c}
\triangle_1\triangle_2\triangle_3u=-\ln(e^{-\triangle_1\triangle_2u}+e^{-\triangle_1\triangle_3u}+e^{-\triangle_2\triangle_3u}-m-n-2)-\triangle_1\triangle_2u-\triangle_1\triangle_3u-\triangle_2\triangle_3u,\\
mn=(1-e^{-\triangle_1\triangle_2u})(1-e^{-\triangle_1\triangle_3u})(1-e^{-\triangle_2\triangle_3u}).
\end{array}
$$
 Solving for $m$ and $n$ one obtains 
 $$
 m=\frac{b-\sqrt{b^2-4c}}{2}, \quad n=\frac{b+\sqrt{b^2-4c}}{2},
 $$
 where
 $$
 \begin{array}{c}
 b=e^{-\triangle_1\triangle_2u}+e^{-\triangle_1\triangle_3u}+e^{-\triangle_2\triangle_3u}-2-e^{-\triangle_1 \triangle_2\triangle_3u-\triangle_1\triangle_2u-\triangle_1\triangle_3u-\triangle_2\triangle_3u},\\
 c=(1-e^{-\triangle_1\triangle_2u})(1-e^{-\triangle_1\triangle_3u})(1-e^{-\triangle_2\triangle_3u}).
 \end{array}
 $$

\begin{proposition} \label{pc4} Darboux system (\ref{c4Dar}) can be written as a single difference equation for $u$,
\begin{equation}\label{c4PDE}
\begin{array}{c}
\triangle_1\triangle_2\left(\ln \left(1+\frac{m}{1-e^{-\triangle_1\triangle_2u}}\right)+\frac{1}{2}\triangle_1\triangle_3u+\frac{1}{2}\triangle_2\triangle_3u\right) \\
+\triangle_1\triangle_3\left(\ln \left(1+\frac{m}{1-e^{-\triangle_1\triangle_3u}}\right)+\frac{1}{2}\triangle_1\triangle_2u+\frac{1}{2}\triangle_2\triangle_3u\right) \\
+\triangle_2 \triangle_3\left(\ln \left(1+\frac{m}{1-e^{-\triangle_2\triangle_3u}}\right)+\frac{1}{2}\triangle_1\triangle_2u+\frac{1}{2}\triangle_1\triangle_3u\right)\\
-\triangle_1\triangle_2\triangle_3\left(-\ln\frac{m}{{(1-e^{-\triangle_1\triangle_2u})(1-e^{-\triangle_1\triangle_3u})(1-e^{-\triangle_2\triangle_3u})}} \right)=0.
\end{array}
\end{equation}
Equation (\ref{c4PDE}) is represented in Euler-Lagrange form corresponding to a Lagrangian $ \int F\, \delta x^1\delta x^2\delta x^3$, with the Lagrangian density
\begin{equation}\label{c4F}
\begin{array}{c}
F=
-  \operatorname{Li}_2\left(e^{-\triangle_1\triangle_2u}\right)
- \operatorname{Li}_2\left(e^{-\triangle_1\triangle_3u}\right)
- \operatorname{Li}_2\left(e^{-\triangle_2\triangle_3u}\right)
- \operatorname{Li}_2\left(\frac{1}{1+m}\right)
\ \\
+\operatorname{Li}_2\left(\frac{e^{-\triangle_1\triangle_2u}}{1+m}\right)
+\operatorname{Li}_2\left(\frac{e^{-\triangle_1\triangle_3u}}{1+m}\right)
+\operatorname{Li}_2\left(\frac{e^{-\triangle_2\triangle_3u}}{1+m}\right)
+\operatorname{Li}_2\left(({1+m}){e^{-\triangle_1\triangle_2\triangle_3u}}\right)
\ \\
+(\triangle_1\triangle_2u+ \triangle_1\triangle_3u+ \triangle_2\triangle_3u+\ln(1+m))\ln(1+m)
\\
+\frac{1}{2}\triangle_1\triangle_2u \triangle_1\triangle_3u +\frac{1}{2}\triangle_1\triangle_2u\triangle_2\triangle_3u +\frac{1}{2}\triangle_1\triangle_3u\triangle_2\triangle_3u.
\end{array}
\end{equation}
Here integration over the discrete variables $x^1, x^2, x^3$, denoted $ \int \delta x^1 \delta x^2 \delta x^3$, is understood as summation over all $x^1, x^2, x^3$-translates of the density $F$, and
the dilogarithm function is defined as $\operatorname{Li}_2(x)=-\int_0^x \frac{\ln (1-t)}{t} dt$.
\end{proposition}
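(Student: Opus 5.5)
I will follow the same scheme as in the proofs of Propositions \ref{pc1}--\ref{pc3}. First I parametrise relations (\ref{c4u}) by three phases:
$$
\beta_{12}=\sqrt{a_{12}}\,e^{\varphi},\ \beta_{21}=\sqrt{a_{12}}\,e^{-\varphi},\quad \beta_{13}=\sqrt{a_{13}}\,e^{-\psi},\ \beta_{31}=\sqrt{a_{13}}\,e^{\psi},\quad \beta_{23}=\sqrt{a_{23}}\,e^{\eta},\ \beta_{32}=\sqrt{a_{23}}\,e^{-\eta},
$$
where $a_{ij}=1-e^{-\triangle_i\triangle_ju}$. Substituting into $m=\beta_{12}\beta_{23}\beta_{31}$ gives
$$
\varphi+\psi+\eta=\ln\frac{m}{\sqrt{a_{12}a_{13}a_{23}}}.
$$
Here $m$ is the explicit root expressed through $b,c$, i.e.\ through the second and third differences of $u$.

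Next I rewrite the discrete Darboux system (\ref{c4Dar}) in these variables. I will use the shift relations $T_i\beta_{jk}=(\beta_{ji}\beta_{ik}+\beta_{jk})/(1-\beta_{ij}\beta_{ji})$. Taking the ratio $T_3\beta_{12}/T_3\beta_{21}$ expresses $\triangle_3\varphi$ via $u$ alone. Indeed, $T_3\beta_{12}\,(1-a_{13})=\beta_{13}\beta_{32}+\beta_{12}$, and multiplying by $\beta_{21}$ gives $n+a_{12}$. Similarly $T_3\beta_{21}\,(1-a_{23})\beta_{12}=m+a_{12}$. Hence
$$
2\triangle_3\varphi=\ln\frac{T_3a_{12}}{a_{12}}+\ln\frac{n+a_{12}}{m+a_{12}}+\triangle_2\triangle_3u-\triangle_1\triangle_3u .
$$
Using $mn=a_{12}a_{13}a_{23}$, the ratio $(n+a_{12})/(m+a_{12})$ can be turned into an expression in $m$ and the $a$'s. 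Analogous formulas hold for $\triangle_2\psi$ and $\triangle_1\eta$. I expect these to be of the form $\ln(1+m/a_{ij})$ plus logs of $a$'s and their shifts, plus linear terms in second differences of $u$.

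Applying $\triangle_1\triangle_2\triangle_3$ to the phase identity and substituting these expressions gives a difference equation. I then redistribute the terms $\ln a_{ij}$ and $\ln T_ka_{ij}$ exactly as in the proof of Proposition \ref{pc3}. The linear terms are symmetrised into the halves $\tfrac12\triangle_i\triangle_ju$. The aim is to land on the manifestly Euler--Lagrange form (\ref{c4PDE}). For a density $F(\triangle_1\triangle_2u,\triangle_1\triangle_3u,\triangle_2\triangle_3u,\triangle_1\triangle_2\triangle_3u)$, the discrete Euler--Lagrange operator is obtained by summation by parts. It reads
$$
\sum_{i<j}\triangle_i\triangle_j\Big(\frac{\partial F}{\partial(\triangle_i\triangle_ju)}\Big)-\triangle_1\triangle_2\triangle_3\Big(\frac{\partial F}{\partial(\triangle_1\triangle_2\triangle_3u)}\Big),
$$
up to the backward-shift convention, which I will use consistently on both sides. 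Unlike Propositions \ref{pc2}--\ref{pc3}, no correction terms appear, since all directions are discrete and no continuous derivative mixes with a difference. Matching terms then gives the required partial derivatives of $F$:
$$
\frac{\partial F}{\partial(\triangle_1\triangle_2\triangle_3u)}=-\ln\frac{m}{a_{12}a_{13}a_{23}},\qquad
\frac{\partial F}{\partial(\triangle_1\triangle_2u)}=\ln\Big(1+\frac{m}{a_{12}}\Big)+\tfrac12\triangle_1\triangle_3u+\tfrac12\triangle_2\triangle_3u,
$$
and similarly for the other two second differences.

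Finally I verify (\ref{c4F}) by differentiation, rather than integrating blindly. I use $\frac{d}{dx}\operatorname{Li}_2(x)=-\ln(1-x)/x$ and the implicit derivatives of $m$. These come from the two defining relations $m+n=b$ and $mn=c$, written with $X=\triangle_1\triangle_2\triangle_3u$ and $Y_{ij}=\triangle_i\triangle_ju$, and give $\partial m=(m\,\partial b-\partial c)/(m-n)$. Compatibility of the four required derivatives (equality of mixed partials) is automatic once they come from a genuine Euler--Lagrange equation. Even so, checking that the eight dilogarithms plus the quadratic and logarithmic terms reproduce them is the main obstacle. The terms containing $\partial m$ must cancel among themselves, because $\partial F/\partial m=0$ should hold identically. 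I would verify that first: the derivative of the $m$-dependent part of $F$ with respect to $m$ vanishes on $mn=c$, $m+n=b$. After that, only the explicit dependence on $X$ and $Y_{ij}$ remains, and its derivatives reduce to elementary logarithm identities. Total-derivative ambiguities and overall constants are absorbed as in Propositions \ref{pc2}--\ref{pc3}. The independence from the choice of root follows as before.
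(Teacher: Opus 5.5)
Your route is the paper's: phase parametrisation, expressing $\triangle_3\varphi,\triangle_2\psi,\triangle_1\eta$ through $u$, applying $\triangle_1\triangle_2\triangle_3$, rearranging into (\ref{c4PDE}), and matching with the Euler--Lagrange operator. Your idea that $\partial F/\partial m=0$ is also correct. Write $S=\sum_{i<j}\triangle_i\triangle_ju$ and $X=\triangle_1\triangle_2\triangle_3u$. The $m$-derivative of (\ref{c4F}) is a multiple of $\ln\frac{(m+a_{12})(m+a_{13})(m+a_{23})\,e^{S}}{m(1-(1+m)e^{-X})}$, and this vanishes exactly when $m^2-bm+c=0$.

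The genuine gap is the Euler--Lagrange step. Write $F_{ij}=\partial F/\partial(\triangle_i\triangle_ju)$ and $F_{123}=\partial F/\partial X$. Your operator $\sum_{i<j}\triangle_i\triangle_jF_{ij}-\triangle_1\triangle_2\triangle_3F_{123}$ is correct only when read with backward differences $\nabla_i=1-T_i^{-1}$, since $\triangle_i^*=-\nabla_i$. But (\ref{c4PDE}) is written with forward differences. Converting by applying $T_1T_2T_3$ gives $\sum_{i<j}T_k\triangle_i\triangle_jF_{ij}-\triangle_1\triangle_2\triangle_3F_{123}$ with $\{i,j,k\}=\{1,2,3\}$. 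Using $T_k=1+\triangle_k$ this becomes
\[
\sum_{i<j}\triangle_i\triangle_j(F_{ij})-\triangle_1\triangle_2\triangle_3\bigl(F_{123}-F_{12}-F_{13}-F_{23}\bigr)=0,
\]
which is exactly the operator the paper uses. Applying the backward convention to (\ref{c4PDE}) instead does not help: $T_k^{-1}=1-\nabla_k$ regenerates the same corrections.

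Your explanation of the corrections in Propositions \ref{pc2} and \ref{pc3} is backwards. They come from the discrete directions: a second-order term lacking $\triangle_k$ picks up $T_k$. That is why there is one correction in Proposition \ref{pc2}, two in Proposition \ref{pc3}, three here, and none only in the purely continuous case.

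Consequently your target $F_{123}=B:=-\ln\frac{m}{a_{12}a_{13}a_{23}}$ is wrong. Let $A_{ij}$ denote the brackets in (\ref{c4PDE}). The correct targets are $F_{ij}=A_{ij}$ and
\[
F_{123}=B+A_{12}+A_{13}+A_{23}=\ln\frac{(m+a_{12})(m+a_{13})(m+a_{23})}{m}+S=\ln\bigl(1-(1+m)e^{-X}\bigr).
\]
The last equality is exactly your relation $\partial F/\partial m=0$. The result is the explicit $X$-derivative of $\operatorname{Li}_2\bigl((1+m)e^{-X}\bigr)$, so (\ref{c4F}) checks out.

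With your target, the verification of (\ref{c4F}) fails. Your four functions are not even a gradient: they differ from the gradient of (\ref{c4F}) by $(0,0,0,-\sum_{i<j}A_{ij})$. The sum $\sum_{i<j}A_{ij}$ is not a function of $X$ alone; in the continuum scaling its leading term is $m\sum_{i<j}1/\triangle_i\triangle_ju$. So your claim that ``compatibility is automatic'' also breaks down.

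There is also a smaller slip. $T_3\beta_{12}/T_3\beta_{21}=e^{2T_3\varphi}$ exactly, so no $\ln(T_3a_{12}/a_{12})$ term arises, and the linear term is $\triangle_1\triangle_3u-\triangle_2\triangle_3u$, not the reverse. Combined with $(n+a_{12})(m+a_{12})=a_{12}\,(T_3a_{12})\,e^{-\triangle_1\triangle_3u-\triangle_2\triangle_3u}$, this reproduces the paper's formula for $\triangle_3\varphi$.
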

 \noindent {\it Proof:} Parametrising relations (\ref{c4u}) in the form
$$
\begin{array}{c}
\beta_{12}=\sqrt{1-e^{-\triangle_1\triangle_2u}}\, e^{\varphi}, \quad \beta_{21}=\sqrt{1-e^{-\triangle_1\triangle_2u}}\, e^{-\varphi}, \\
\beta_{13}=\sqrt{1-e^{-\triangle_1\triangle_3u}}\, e^{-\psi}, \quad \beta_{31}=\sqrt{1-e^{-\triangle_1\triangle_3u}}\, e^{\psi}, \\
\beta_{23}=\sqrt{1-e^{-\triangle_2\triangle_3u}}\, e^{\eta}, \quad \beta_{32}=\sqrt{1-e^{-\triangle_2\triangle_3u}}\, e^{-\eta},
\end{array}
$$
and substituting into the expression for $m$, we we obtain
\begin{equation}\label{c4short}
\begin{array}{c}
\varphi+\psi +\eta=\ln\frac{m}{\sqrt{(1-e^{-\triangle_1\triangle_2u})(1-e^{-\triangle_1\triangle_3u})(1-e^{-\triangle_2\triangle_3u})}}.
\end{array}
\end{equation}
Under the same parametrisation, Darboux system  (\ref{c4Dar}) gives
$$
\begin{array}{c}
\triangle_3\varphi=-\ln \left(1+\frac{m}{1-e^{-\triangle_1\triangle_2u}}\right)-\frac{1}{2}\ln(1-e^{-\triangle_1\triangle_2u})+\frac{1}{2}\ln(1-e^{-\triangle_1\triangle_2\triangle_3u-\triangle_1\triangle_2u})-\triangle_2\triangle_3u, \\
\triangle_2\psi=-\ln \left(1+\frac{m}{1-e^{-\triangle_1\triangle_3u}}\right)-\frac{1}{2}\ln(1-e^{-\triangle_1\triangle_3u})+\frac{1}{2}\ln(1-e^{-\triangle_1\triangle_2\triangle_3u-\triangle_1\triangle_3u})-\triangle_1\triangle_2u, \\
\triangle_1\eta=-\ln \left(1+\frac{m}{1-e^{-\triangle_2\triangle_3u}}\right)-\frac{1}{2}\ln(1-e^{-\triangle_2\triangle_3u})+\frac{1}{2}\ln(1-e^{-\triangle_1\triangle_2\triangle_3u-\triangle_2\triangle_3u})-\triangle_1\triangle_3u.
\end{array}
$$
Applying to relation (\ref{c4short}) the operator $\triangle_1\triangle_2\triangle_3$, one obtains a difference equation in terms of $u$,
 $$
\begin{array}{c}
\triangle_1\triangle_2\left(-\ln \left(1+\frac{m}{1-e^{-\triangle_1\triangle_2u}}\right)-\frac{1}{2}\ln(1-e^{-\triangle_1\triangle_2u})+\frac{1}{2}\ln(1-e^{-\triangle_1\triangle_2\triangle_3u-\triangle_1\triangle_2u})-\triangle_2\triangle_3u\right) \\
+\triangle_1\triangle_3\left(-\ln \left(1+\frac{m}{1-e^{-\triangle_1\triangle_3u}}\right)-\frac{1}{2}\ln(1-e^{-\triangle_1\triangle_3u})+\frac{1}{2}\ln(1-e^{-\triangle_1\triangle_2\triangle_3u-\triangle_1\triangle_3u})-\triangle_1\triangle_2u\right) \\
+\triangle_2 \triangle_3\left(-\ln \left(1+\frac{m}{1-e^{-\triangle_2\triangle_3u}}\right)-\frac{1}{2}\ln(1-e^{-\triangle_2\triangle_3u})+\frac{1}{2}\ln(1-e^{-\triangle_1\triangle_2\triangle_3u-\triangle_2\triangle_3u})-\triangle_1\triangle_3u\right)\\
-\triangle_1\triangle_2\triangle_3\left(\ln\frac{m}{\sqrt{(1-e^{-\triangle_1\triangle_2u})(1-e^{-\triangle_1\triangle_3u})(1-e^{-\triangle_2\triangle_3u})}} \right)=0.
\end{array}
$$
This equation can be simplified by eliminating square root in the last term, giving an equivalent equation
  $$
\begin{array}{c}
\triangle_1\triangle_2\left(-\ln \left(1+\frac{m}{1-e^{-\triangle_1\triangle_2u}}\right)-\triangle_2\triangle_3u\right) \\
+\triangle_1\triangle_3\left(-\ln \left(1+\frac{m}{1-e^{-\triangle_1\triangle_3u}}\right)-\triangle_1\triangle_2u\right) \\
+\triangle_2 \triangle_3\left(-\ln \left(1+\frac{m}{1-e^{-\triangle_2\triangle_3u}}\right)-\triangle_1\triangle_3u\right)\\
-\triangle_1\triangle_2\triangle_3\left(\ln\frac{m}{{(1-e^{-\triangle_1\triangle_2u})(1-e^{-\triangle_1\triangle_3u})(1-e^{-\triangle_2\triangle_3u})}} \right)=0.
\end{array}
$$
Redistributing linear terms in the first three summands gives an equivalent equation   (\ref{c4PDE}) which is already in Euler-Lagrange form.
 To reconstruct the corresponding Lagrangian, note that for a Lagrangian density   $F=F(\triangle_1\triangle_2u, \ \triangle_1\triangle_3u,\  \triangle_2\triangle_3u, \ \triangle_1\triangle_2\triangle_3u)$, the corresponding Euler-Lagrange equation is 
$$
\begin{array}{c}
\triangle_1\triangle_2\left(\frac{\partial F}{\partial (\triangle_1\triangle_2u)}\right)+\triangle_1\triangle_3\left(\frac{\partial F}{\partial (\triangle_1 \triangle_3u)}\right)+\triangle_2\triangle_3\left(\frac{\partial F}{\partial (\triangle_2\triangle_3 u)}\right)\\
-\triangle_1\triangle_2\triangle_3\left(\frac{\partial F}{\partial (\triangle_1\triangle_2\triangle_3u)}-\frac{\partial F}{\partial (\triangle_1\triangle_2u)}-\frac{\partial F}{\partial (\triangle_1\triangle_3u)}-\frac{\partial F}{\partial (\triangle_2\triangle_3u)}\right)=0.
\end{array}
$$
Comparing this with (\ref{c4PDE}) gives the expressions for all first-order derivatives of $F$ which, on integration, leads to the Lagrangian density
(\ref{c4F}) (modulo unessential  total derivatives that do not effect the Euler-Lagrange equation).  $\square$

\bigskip

{ \bf Symmetric reduction} (discrete version)  of the Darboux system (\ref{c4Dar}) is specified by the condition
\begin{equation*}
\beta _{13}\beta _{32}\beta _{21}=\beta _{12}\beta _{23}\beta _{31},
\end{equation*}
which can be written as
\begin{equation}\label{c4E}
\begin{array}{c}
e^{-\triangle_1\triangle_2\triangle_3u-\triangle_1\triangle_2u-\triangle_1\triangle_3u-\triangle_2\triangle_3u}\\
=e^{-\triangle_1\triangle_2u}+e^{-\triangle_1\triangle_3u}+e^{-\triangle_2\triangle_3u}-2\sqrt{(1-e^{-\triangle_1\triangle_2u})(1-e^{-\triangle_1\triangle_3u})(1-e^{-\triangle_2\triangle_3u})}-2.
\end{array}
\end{equation}
Equivalently, it can be represented as an alternative form of the CKP equation (eq. (6.11) of \cite{Schief03}, $u=-\ln \tau$):
\begin{equation*}
\begin{array}{c}
(e^{-\triangle_1\triangle_2\triangle_3u-\triangle_1\triangle_2u-\triangle_1\triangle_3u-\triangle_2\triangle_3u}-e^{-\triangle_1\triangle_2u}-e^{-\triangle_1\triangle_3u}-e^{-\triangle_2\triangle_3u})^2\\
=4\, e^{-\triangle_1\triangle_2u-\triangle_1\triangle_3u-\triangle_2\triangle_3u}(e^{\triangle_1\triangle_2u}+e^{\triangle_1\triangle_3u}+e^{\triangle_2\triangle_3u}-e^{-\triangle_1\triangle_2\triangle_3u}-1).
\end{array}
\end{equation*}
Equation (\ref{c4E}) can be obtained from Darboux system (\ref{c4Dar}) by setting
$$
\beta_{jk}=\sqrt{1-e^{-\triangle_j\triangle_ku}}\, e^{\frac{\triangle_ju-\triangle_ku}{2}},
$$
see e.g. (\cite{DS2000}, formula (2.10),  $u=-\ln \tau$) for an equivalent parametrisation.

\medskip

{ \bf Dispersionless limit} of the Lagrangian density (\ref{c4F}), obtained by setting $\triangle_1\triangle_2\triangle_3 u\to 0$ and $\triangle_i\to \partial_i$, coincides with the second-order Lagrangian density
\begin{equation}\label{c4f}
\begin{array}{c}
f=-\operatorname{Li}_2(e^{-u_{12}})-\operatorname{Li}_2(e^{-u_{13}}) -\operatorname{Li}_2(e^{-u_{23}})- \operatorname{Li}_2\left(\frac{1}{1+m}\right)\\
\ \\
+\operatorname{Li}_2(\frac{e^{-u_{12}}}{1+m})+\operatorname{Li}_2(\frac{e^{-u_{13}}}{1+m})+\operatorname{Li}_2(\frac{e^{-u_{23}}}{1+m})
+\operatorname{Li}_2\left({1+m}\right)\\
\ \\
+(u_{12}+u_{13}+u_{23}+\ln(1+m))\ln(1+m)+\frac{1}{2}u_{12} u_{13} +\frac{1}{2}u_{12}u_{23} +\frac{1}{2}u_{13}u_{23},
\end{array}
\end{equation}
 where $ m=\frac{b-\sqrt{b^2-4c}}{2}$ and
 $$
 \begin{array}{c}
 b=e^{-u_{12}}+e^{-u_{13}}+e^{-u_{23}}-2-e^{-u_{12}-u_{13}-u_{23}},\quad
 c=(1-e^{-u_{12}})(1-e^{-u_{13}})(1-e^{-u_{23}}).
 \end{array}
 $$

\section{Second-order integrable Lagrangians in 3D}
\label{sec:Lag}

 In the forthcoming paper \cite{XFP}, we classify 3D second-order integrable Lagrangians of the form (\ref{sec}). Without going into details (with regards to what exactly integrability means and how to test it), we state the main result: modulo simple reparametrisations, there exist exactly four essentially different types of integrable Lagrangian densities $f$. 
 
 The first three of them are expressible via elementary functions:
 \medskip
$$
\begin{array}{c}
f= \sqrt{u_{xy}u_{xt}u_{yt}},\\
\ \\
f=u_{xt}u_{yt}\sqrt{1-\frac{2u_{xy}}{u_{xt}u_{yt}}}-2u_{xy}\arctanh \sqrt{1-\frac{2u_{xy}}{u_{xt}u_{yt}}},\\
\ \\
f= (u_{xt}+u_{xy}) {\rm \ arccoth \ }\frac{\sqrt{u_{xy}^{2}+u_{xt}^{2}- 2\, u_{xy}u_{xy}\coth \frac{u_{yt}}{2}}}{u_{xt}+u_{xy}}-(u_{xt}-u_{xy}) {\rm \ arccoth \ } \frac{\sqrt{u_{xy}^{2}+u_{xt}^{2}- 2\, u_{xy}u_{xy}\coth \frac{u_{yt}}{2}}}{u_{xt}-u_{xy}}.
\end{array}
$$

\medskip

The fourth  density, expressible via dilogarithm function, is considerably more complicated:
$$
\begin{array}{c}
f=-\operatorname{Li}_2(e^{-u_{xy}})-\operatorname{Li}_2(e^{-u_{xt}}) -\operatorname{Li}_2(e^{-u_{yt}})- \operatorname{Li}_2\left(\frac{1}{1+m}\right)\\
\ \\
+\operatorname{Li}_2(\frac{e^{-u_{xy}}}{1+m})+\operatorname{Li}_2(\frac{e^{-u_{xt}}}{1+m})+\operatorname{Li}_2(\frac{e^{-u_{yt}}}{1+m})
+\operatorname{Li}_2\left({1+m}\right)\\
\ \\
+(u_{xy}+u_{xt}+u_{yt}+\ln(1+m))\ln(1+m)+\frac{1}{2}u_{xy} u_{xt} +\frac{1}{2}u_{xy}u_{yt} +\frac{1}{2}u_{xt}u_{yt},
\end{array}
$$
where $\operatorname{Li}_2$ is the dilogarithm function and $m$ is defined as in formula (\ref{c4f}). 
 Differentiation of the fourth  density $f$ yields
 $$
\begin{array}{c}
f_{u_{xy}}=\ln \left(1+\frac{m}{1-e^{-u_{xy}}}\right)+\frac{1}{2}u_{xt}+\frac{1}{2}u_{yt}, \\
f_{u_{xt}}=\ln \left(1+\frac{m}{1-e^{-u_{xt}}}\right)+\frac{1}{2}u_{xy}+\frac{1}{2}u_{yt}, \\
f_{u_{yt}}=\ln \left(1+\frac{m}{1-e^{-u_{yt}}}\right)+\frac{1}{2}u_{xy}+\frac{1}{2}u_{xt}.\\
\end{array}
$$
Let $p_1=-\coth\frac{u_{yt}}{2},\ p_2=-\coth\frac{u_{xt}}{2}, \ p_3=-\coth\frac{u_{xy}}{2},$ then
 $$
\begin{array}{c}
f_{u_{xy}}=\arccosh\frac{p_3+p_1p_2}{\sqrt{(p_1^2-1)(p_2^2-1)}}, \quad
f_{u_{xt}}=\arccosh\frac{p_2+p_1p_3}{\sqrt{(p_1^2-1)(p_3^2-1)}} \quad
f_{u_{yt}}=\arccosh\frac{p_1+p_2p_3}{\sqrt{(p_2^2-1)(p_3^2-1)}}.\\
\end{array}
$$
There exits an equivalent alternative formula for the fourth density related to hyperbolic geometry.
To see this, consider a convex right-angled hyperbolic hexagon with three non-adjacent edge lengths $L_1, L_2, L_3$ and  their opposite edge lengths $l_1, l_2, l_3$,  and set
$$
\cosh l_1=-\coth \frac{u_{yt}}{2}, \quad \cosh l_2=-\coth \frac{u_{xt}}{2}, \quad \cosh l_3=-\coth \frac{u_{xy}}{2}. 
$$
The hyperbolic  laws of cosines are (\cite{Bobenko}, p. 160-161):
\begin{equation}\label{slhyp}
\begin{array}{c}
\cosh{l_1}=-\cosh{l_2}\cosh{l_3}+\sinh{l_2}\sinh{l_3}\cosh L_1,\\
\cosh{l_2}=-\cosh{l_1}\cosh{l_3}+\sinh{l_1}\sinh{l_3}\cosh L_2,\\
\cosh{l_3}=-\cosh{l_1}\cosh{l_2}+\sinh{l_1}\sinh{l_2}\cosh L_3,
\end{array}
\end{equation}
and 
\begin{equation}\label{s2hyp}
\begin{array}{c}
\cosh L_1=-\cosh L_2\cosh L_3+\sinh L_2\sinh L_3\cosh{l_1},\\
\cosh L_2=-\cosh L_1\cosh L_3+\sinh L_1\sinh L_3\cosh{l_2},\\
\cosh L_3=-\cosh L_1\cosh L_2+\sinh L_1\sinh L_2\cosh{l_3}.
\end{array}
\end{equation}
Then by using \eqref{s2hyp}, we have
\begin{equation*}
\begin{array}{c}
u_{yt}=2{\rm arccoth}(-\cosh{l_1})=\ln\frac{\cosh{l_1}-1}{\cosh{l_1}+1}=\ln\frac{\cosh{L_1}+\cosh{(L_2-L_3)}}{\cosh{L_1}+\cosh{(L_2+L_3)}}
=\ln \frac{\cosh \frac{L_1+L_2-L_3}{2} \cosh \frac{L_1+L_3-L_2}{2}}{\cosh \frac{L_1+L_2+L_3}{2}\cosh \frac{L_2+L_3-L_1}{2}},
\\
\ \\
u_{xt}=2{\rm arccoth}(-\cosh{l_2})=\ln\frac{\cosh{l_2}-1}{\cosh{l_2}+1}=\ln\frac{\cosh{L_2}+\cosh{(L_1-L_3)}}{\cosh{L_2}+\cosh{(L_1+L_3)}}
=\ln \frac{\cosh \frac{L_1+L_2-L_3}{2} \cosh \frac{L_2+L_3-L_1}{2}}{\cosh \frac{L_1+L_2+L_3}{2}\cosh \frac{L_1+L_3-L_2}{2}},
\\
\ \\
u_{xy}=2{\rm arccoth}(-\cosh{l_3})=\ln\frac{\cosh{l_3}-1}{\cosh{l_3}+1}=\ln\frac{\cosh{L_3}+\cosh{(L_1-L_2)}}{\cosh{L_3}+\cosh{(L_1+L_2)}}
=\ln \frac{\cosh \frac{L_2+L_3-L_1}{2} \cosh \frac{L_1+L_3-L_2}{2}}{\cosh \frac{L_1+L_2+L_3}{2}\cosh \frac{L_1+L_2-L_3}{2}}.
\end{array}
\end{equation*}
By using \eqref{slhyp}, we also have 
\begin{equation*}
f_{u_{yt}}=L_1, \quad f_{u_{xt}}=L_2, \quad f_{u_{xy}}=L_3,
\end{equation*}
which gives
 \begin{equation*}
\begin{array}{c}
f=L_1u_{yt}+L_2u_{xt}+L_3u_{xy}-2\phi\left(\frac{L_1+L_2+L_3}{2}\right) +2\phi\left(\frac{L_1+L_2-L_3}{2}\right) +2\phi\left(\frac{L_1+L_3-L_2}{2}\right) +2\phi\left(\frac{L_2+L_3-L_1}{2}\right).
\end{array}
\end{equation*}
Here  $
\phi(\theta) = - \int_0^\theta \ln(2 \cosh \xi) \, d\xi
$ 
is a  hyperbolic analogue of the Lobachevsky function.
The standard Lobachevsky function \( \Lambda(\theta) \) is defined as:
 $
 \Lambda(\theta)=-\int_0^\theta\ln |2 \sin \xi| \ d\xi,
 $
thus $ \phi(\theta)= {\rm i} \Lambda(\frac{\pi}{2}-{\rm i}\theta)$.
It is also easy to show that there is a relationship between the hyperbolic   Lobachevsky function and the dilogarithm function, namely:
  $$
 \begin{array}{c}
\phi(\theta)
=\frac{\pi^2}{24}+\frac{1}{2}\theta^2+\frac{1}{2}\operatorname{Li}_2\left(-e^{ 2  \theta} \right).
 \end{array}
$$
The so defined density $f$ has geometric meaning of `capacity'  of a hyperbolic hexagon.
For spherical/hyperbolic triangles, similar expressions  have appeared in the context of variational principles for circle packings and triangulated surfaces. We refer to \cite{XFP} for further details. We also refer to \cite{PS} for another interesting relation of the symmetric reduction of the discrete Darboux system to spherical geometry.

Comparison with Sections \ref{s:3+0} -- \ref{s:0+3}  shows that, modulo simple rescalings,  the above Lagrangian densities are nothing but dispersionless limits of the Lagrangian densities governing Darboux hierarchy, a connection we have not anticipated when attempting the classification problem.

\section{Appendix: derivation of generating PDE of the KP hierarchy}
\label{sec:Nij}

In this section we provide details of  derivation of the sixth-order integrable Lagrangian PDE (\ref{c1PDE}) from  the two-component system (2.14)-(2.16) of Nijhoff \cite{Nijhoff2024}. 
We follow the notation of \cite{Nijhoff2024}. Let $u$ and $v$ be functions of the three independent variables $\xi, \sigma, \tau$. Introduce the quantity 
$$
{\digamma}:=\frac{4+\frac{u_{\sigma \tau}}{(1+u_{\sigma})(1+u_{\tau})}\partial_{\xi} \ln \left( \frac{u^2_{\sigma \tau}}{(1+u_{\sigma})(1+u_{\tau})} \right)}{2v+\partial_{\xi}\ln \frac{1+u_{\tau}}{1+u_{\sigma}}};
$$
(note that the above $u$ and ${\digamma}$ have different meaning than the analogous variables in our paper). Generating equations of the KP hierarchy is a system of PDEs for the two dependent variables $u$ and $v$ \cite{Nijhoff2024}:
\begin{align*}
v&=\partial_{\xi}\ln\left((1+u_{\sigma}) {\digamma}+\frac{u_{\sigma \tau}}{1+u_{\tau}}\right)+\frac{2}{{\digamma}+\frac{u_{\sigma \tau}}{(1+u_{\sigma})(1+u_{\tau})}}\\
&=-\partial_{\xi}\ln\left((1+u_{\tau}) {\digamma}-\frac{u_{\sigma \tau}}{1+u_{\sigma}}\right)+\frac{2}{{\digamma}-\frac{u_{\sigma \tau}}{(1+u_{\sigma})(1+u_{\tau})}},
\end{align*}
$$
2v_{\sigma \tau}=\partial_{\sigma}\left [(1+u_{\tau})v\left({\digamma}- \frac{u_{\sigma \tau}}{(1+u_{\sigma})(1+u_{\tau})} \right)\right ]-
\partial_{\tau}\left [(1+u_{\sigma})v\left({\digamma}+ \frac{u_{\sigma \tau}}{(1+u_{\sigma})(1+u_{\tau})} \right)\right].
$$
Introducing 
$$
\alpha=\frac{u_{\sigma\tau}}{(1+u_\sigma)(1+u_\tau)},
$$
we can rewrite the above equations in the form
\begin{equation}\label{ff1}
 {\digamma}=\frac{4+\alpha \partial_\xi  \ln{(\alpha u_{\sigma \tau})}}{2v+ \partial_\xi \ln\left(\frac{1+u_\tau}{1+u_\sigma}\right)},
 \end{equation}
 \begin{equation}\label{ff2}
v=\partial_\xi  \ln\left[(1+u_\sigma)({\digamma}+\alpha)\right]+\frac{2}{{\digamma}+\alpha}, \quad v=-\partial_\xi  \ln\left[(1+u_\tau)({\digamma}-\alpha)\right]+\frac{2}{{\digamma}-\alpha},
 \end{equation}
 \begin{equation}\label{ff3}
2 v_{\sigma\tau}=\partial_\sigma\left[(1+u_\tau)v({\digamma}-\alpha)\right]-\partial_\tau\left[(1+u_\sigma)v({\digamma}+\alpha)\right].
\end{equation}
One can show that  these equations are not independent, in particular, equation (\ref{ff1}) is a corollary of  equations (\ref{ff2}), (\ref{ff3}). Adding equations \eqref{ff2} gives
\begin{equation}\label{ff0}
2v=-\partial_\xi \ln\frac{1+u_\tau}{1+u_\sigma}+\partial_\xi \ln\frac{{\it F}+\alpha}{{\digamma}-\alpha}+\frac{4{\digamma}}{{\digamma}^2-\alpha^2}.
\end{equation}
Subtracting equations \eqref{ff2} gives
\begin{equation}\label{ff00}
\partial_\xi \ln(1+u_\sigma)(1+u_\tau)+\partial_\xi \ln({\digamma}^2-\alpha^2)-\frac{4{\alpha}}{{\digamma}^2-\alpha^2}=0.
\end{equation}
Inserting \eqref{ff0} into \eqref{ff1} we have
\begin{equation}\label{LL}
{\digamma}\partial_\xi \ln\frac{{\digamma}+\alpha}{{\digamma}-\alpha}+\frac{4{\digamma}^2}{{\digamma}^2-\alpha^2}=4+\alpha \partial_\xi  \ln{(\alpha u_{\sigma \tau})}.
\end{equation}
Multiplying both sides of equation (\ref{LL}) my $\alpha$ and adding to it equation (\ref{ff00}) multiplied by ${\digamma}^2$ gives
$$
({\digamma}^2)_\xi-{\digamma}^2\partial_\xi \ln\left(\frac{\alpha}{u_{\sigma\tau}}\right)=\left(\frac{\alpha} {u_{\sigma\tau}}\right)\big[4u_{\sigma\tau}+ (\alpha u_{\sigma\tau})_\xi\big].
$$
Introducing a new variable $w$ such that $u=-w_{\xi}-\sigma-\tau$, we have $\alpha=\frac{-w_{\sigma \tau \xi}}{w_{\sigma \xi} w_{\tau \xi}}$ and the above equation integrates to
$$
{\digamma}^2=\alpha^2+ \frac{c-4w_{\sigma\tau}}{w_{\sigma\xi}w_{\tau\xi}}=\frac{w_{\sigma\tau\xi}^2+(c-4w_{\sigma\tau})w_{\sigma\xi}w_{\tau\xi}}{w_{\sigma\xi}^2w_{\tau\xi}^2},
$$
where $c$ is an integration constant which can be set equal to zero  (strictly speaking, $c$ can be a function of $\sigma$ and $\tau$, however, it can be absorbed into $w$). Thus we have ${\digamma}=\frac{L}{w_{\sigma\xi}w_{\tau\xi}}$, where $L=\sqrt{w_{\sigma\tau\xi}^2-4 w_{\sigma\tau}w_{\sigma\xi}w_{\tau\xi}}$.  Finally, substituting \eqref{ff0} into the left-hand side of \eqref{ff3} and substituting \eqref{ff2} into the right-hand side of \eqref{ff3}, one has
$$
\left( \ln\frac{{\digamma}+\alpha}{{\digamma}-\alpha}\right)_{\xi\sigma\tau}+\left(\frac{4{\digamma}}{{\digamma}^2-\alpha^2}\right)_{\sigma\tau}-\left( \ln\frac{w_{\tau\xi}}{w_{\sigma\xi}}\right)_{\xi\sigma\tau}
-(w_{\tau\xi}({\digamma}-\alpha))_{\sigma\xi}-(w_{\sigma\xi}({\digamma}+\alpha))_{\tau\xi}=0,
$$
thus,
$$
\left( \frac{L}{w_{\sigma\tau}}\right)_{\sigma\tau}+\left( \frac{L}{w_{\sigma\xi}}\right)_{\sigma\xi}
+\left( \frac{L}{w_{\tau\xi}}\right)_{\tau\xi}-\left( \ln\frac{L-w_{\sigma\tau\xi}}{L+w_{\sigma\tau\xi}}\right)_{\xi\sigma\tau}=0,
$$
which is equivalent to the  sixth-order integrable Lagrangian PDE (\ref{c1PDE}).

\section{Concluding remarks}

\begin{itemize}

\item It was demonstrated by Nijhoff in \cite{Nijhoff2023, Nijhoff2024} that the full hierarchy of the Darboux system (including its continuous, semi-discrete and fully discrete commuting flows), possesses a Lagrangian multiform formulation. It would be nice to express the corresponding Lagrangian multiform in terms of a single potential  $u$ as utilised in the present paper. 
\end{itemize}

\bigskip

{\noindent \bf Data availability statement.}
Data sharing is not applicable to this article as no datasets were generated or analysed during the current study.

\medskip
{\noindent \bf Conflict of interest statement.}
The corresponding author states that there is no conflict of interest.

\bigskip

\section*{Acknowledgments}

We thank Matteo Casati for useful discussions. 
The research of MVP was partially supported by the  NSFC (Grant No.12431008).

\end{document}